\pgfplotsset{compat=1.7}
\definecolor{lam1}{HTML}{001bb0}
\definecolor{lam2}{HTML}{bc0112}
\definecolor{lam3}{HTML}{3d2978}
\newtheoremstyle{mytheoremstyle} 
{\topsep}                    
{\topsep}                    
{\itshape}                   
{}                           
{\sc}                   
{.}                          
{.5em}                       
{}  
\theoremstyle{mytheoremstyle}
\newtheorem{theorem}{Theorem}
\newtheorem{lemma}{Lemma}
\newtheoremstyle{scfont} 
{\topsep}                    
{\topsep}                    
{}                   
{}                           
{\scshape}                   
{}                          
{.5em}                       
{\textbf{Axiom \thmnumber{#2}\thmname{#1}}\thmnote{---{\textsc{#3}.}}}
\theoremstyle{scfont}
\theoremstyle{remark}
\newenvironment{tproof}[1]{\noindent \emph{Proof of Theorem \ref{#1}. \phantomsection\label{pf:#1}\addcontentsline{toc}{subsection}{Pf. Thm. \ref{#1}}}}{\hfill$\blacksquare$ \\}
\newenvironment{subproof}[1][\proofname]{
	\begin{proof}[#1]%
	}{%
	\end{proof}%
}
\newcounter{ax}
\newenvironment{ax}[3]{
	\medskip
	\addtocounter{ax}{1}
	\noindent \textbf{#1\theax}---\sclabel{#2}{#3}.
}{
	\medskip
}
\newcounter{ex}
\titleformat*{\section}{\large\scshape\centering}
\titleformat*{\subsection}{\scshape\centering}
\titleformat*{\subsubsection}{\itshape}
\titleformat*{\paragraph}{\bfseries\centering}
\titleformat*{\subparagraph}{\large\bfseries\centering}
\titlespacing*{\section}{0pt}{5.5ex plus 1ex minus .2ex}{3.5ex plus .2ex}
\renewcommand{\emptyset}{\varnothing}
\renewcommand{\phi}{\varphi}
\def \D{\mathbb{D}}
\def \E{\mathbb{E}}
\def \R{\mathbb{R}}
\def \Pr{\mathbb{P}}
\def \NN{\mathcal{N}}
\def \WW{\mathcal{W}}
\def \VV{\mathcal{V}}
\def \W{\Omega}
\def \s{\succcurlyeq}
\def \P{\mathscr{P}}
\def \Q{\mathscr{Q}}
\def \H{\mathscr{H}}
\def \C{\mathscr{C}}
\def \face{\mathbb{F}}
\def \K{\mathcal{K}}
\def \deg{\emptyset}
\def \U{\mathcal{U}}
\def \vnm{\textup{\texttt{vnm}}}
\def \<{\langle}
\def \>{\rangle}
\newcommand{\mathbbm}[1]{\text{\usefont{U}{bbm}{m}{n}#1}}
\def \1{\mathbbm{1}}
\def \0{\bm{0}}
\DeclareMathSymbol{\smm}{\mathbin}{AMSa}{"39}
\let\integral\int
\renewcommand{\int}{\textup{int}}
\def \ri{\textup{ri}}
\def \conv{\textup{conv}}
\def \cl{\textup{cl}}
\def \ext{\textup{ext}}
\def \UAP{W_{A,P}}
\def \supp{\textup{supp}}
\def \part{\textup{\texttt{part}}}
\def \poly{\textup{\texttt{poly}}}
\def \prob{\textup{\texttt{prob}}}
\newcommand*\bigcdot{\mathpalette\bigcdot@{.6}}
\newcommand*\bigcdot@[2]{\mathbin{\vcenter{\hbox{\scalebox{#2}{$\m@th#1\bullet$}}}}}
\newcommand{\hd}[1]{\medskip \noindent \textbf{#1} \  $\bigcdot$  }
\newcommand{\mylabel}[3]{\def\@currentlabel{#2}\phantomsection\textsc{#3} (\texttt{#2})\label{#1}}
\newcommand{\sclabel}[3]{\def\@currentlabel{\theax}\phantomsection\lowercase{\textsc{#1}.}\label{ax:#2}}
\newcommand{\rexp}[1]{\hyperref[exp:#1]{\textcolor{lam1}{\textsc{exp~#1}}}}
\renewcommand{\r}[1]{\hyperref[#1]{\textcolor{lam1}{\textup{\textbf{A\ref{#1}}}}}}
\newcommand{\rax}[1]{\hyperref[ax:#1]{\textup{(\textsc{#1})}}}
\DeclareMathOperator*{\argmax}{arg\,max}
\title{Modeling the Modeler: A Normative Theory of Experimental Design\footnote{We thank seminar participants at Bart Lipman's Conference, Autonomous University of Barcelona and Centre d'\'Economie de la Sorbonne.}}
\author{
	Fernando Payr\'{o}\footnote{Autonomous University of Barcelona, Barcelona School of Economics and Center for the Study of Organizations and Decisions in Economics; \href{mailto:fernando.payro@uab.cat}{\tt fernando.payro@uab.cat}. Payr\'o gratefully acknowledge financial support from the Ministerio de Economía y Competitividad and Feder (PID2020-116771GB-I00 and PID2023-147183NB-I00) and financial support from the Spanish Agencia Estatal de Investigación (AEI), through the Severo Ochoa Programme for Centres of Excellence in R\&D (CEX2019-000915-S)}  \ and
	Evan Piermont\footnote{Royal Holloway, University of London, Department of Economics; \href{mailto:evan.piermont@rhul.ac.uk}{\tt evan.piermont@rhul.ac.uk}.}
} 
\begin{document}
	
	\maketitle
	
	
	\begin{abstract}
		We consider an analyst whose goal is to identify a subject's utility function through revealed preference analysis. We argue the analyst's preference about which experiments to run should adhere to three normative principles: The first, \emph{Structural Invariance}, requires that the value of a choice experiment only depends on what the experiment may potentially reveal.
		The second, \emph{Identification Separability}, demands that the value of identification is independent 
		of what would have been counterfactually identified had the subject had a different utility. Finally, \emph{Information Monotonicity} asks that more informative experiments are preferred.
		We provide a representation theorem, showing that these three principles characterize \emph{Expected Identification Value} maximization, a functional form that unifies several theories of experimental design. We also study several special cases and discuss potential applications.       
		\footnotesize

		\vspace{0.2cm}
		
		\noindent\textsc{Keywords}: Experimental Design, Partial Identification, Revealed Preferences\\
		\textsc{JEL Classification}: D81  \\
		\textsc{Declarations of interest}: None
	\end{abstract}
	
	
	\newpage
	
	\section{Introduction}
	This paper proposes a theory of experimental design. We consider an analyst who's goal is to identify a subject's utility function through the use of experiments, specifically, by offering the subject decision problems and observing his choices. Because of time, cost, or computational constraints, the analyst will only be able to offer a limited set of decision problems; how then should she choose which experiments to conduct? In this paper, we advance three normative principles and argue that they should guide any rational experimental design, independent of the specific objectives of the analyst.
	
	As we explain in detail below, each observation in an experiment (partially) identifies some set of utilities, those that that are consistent with the observation. For example, observing a subject known to be a CRRA utility maximizer reject an actuarially fair lottery would identify him as risk averse, but might not resolve anything further about their coefficient of risk aversion. With this notion of partial identification in mind, the three normative principles can be stated as follows: First, \emph{Information Monotonicity}, asserts that the analyst prefers sharper identification; that is, if one experiment always identifies a smaller subset of utilities than another experiment, it is preferred. 
	Second, \emph{Structural Invariance}, maintains that the value of an experiment should depend only on what it allows the analyst to identify and not on other structural details. Specifically, if two experiments yield the same set of possible inferences about the subject, they must be valued equally. 
	Finally, \emph{Identification Separability} demands that the value of making some partial identification should depend only on what was identified, and not on what the experiment would have counterfactually identified had the subject made a different choice. 
	
	Our main result shows that an analyst's preference over experiments adheres to these principles if and only if she seeks to maximize \emph{expected identification value}, as we now explain. A rational analyst should be able to assign a value to each partial identification, that is, a value for learning that the subject's true utility lies is some subset. For example, an analyst can reflect on the relative value of learning a CRRA subject  ``is risk averse'' versus learning the subject ``has a coefficient of risk aversion in $[1.5,2]$.'' This value of identification is subjective and encodes the analyst's particular goals and motivations.%
	\footnote{For instance, an analyst may only be only interested in classifying subjects as risk averse or not, but uninterested in any further details of the subject's preference. Such an analyst would be indifferent between the two partial identifications from the prior sentence.} 
	
	Given a ex-ante distribution over the utility type of the subject, interpreted as the analyst's prior belief about the subject's preferences, each experiment will induce a distribution over consequent partial identifications: our three normative principles characterize ranking experiments in accordance with the expected value of the identification they permit.
	By accommodating any subjective value of identification, our theory unveils the common facets of rational experimental design that are independent of the idiosyncratic objectives of the analyst. Indeed, we detail how our approach unifies several distinct experimental paradigms. We then show how this theory provides additional insight in specific environments, where the normative principles settle  concrete design choices. In particular, we specialize the model in two ways. First, by examining the case where the subject is known to be an expected utility maximizer. Second, by examining the case where the analyst seeks to maximize the reduction in entropy between her prior and posterior.
	
	\hd{Discussion of Model and Results}
	To keep the analysis simple, we abstract away from the physical details of an experiment. Instead, we model an experiment as a menu of alternatives $A$ and a partition of the menu $\P$ that captures what is observable to the analyst; when the subject chooses $a\in A$, the analyst observes the (unique) $P\in \P$ such that $a \in P$.  When $\P$ is the discrete partition, the subject's choice is perfectly observed, as is likely in very simple environments, e.g., single-stage laboratory experiments. By allowing $\P$ to be coarser, we allow for more complex experimental environments. For example, $A$ could represent the set of strategies in a dynamic environment where $\P$ captures the unobservability of off-path behavior.\footnote{This interpretation is discussed in Section \ref{sec:partial-observability}.} In less controlled environments, observational restrictions are common-place, e.g., an online platform (google) might be able to observe which retailer was chosen by a user, but not the user's actual purchase.  We take as primitive a preference over \emph{randomized experiments}, that is, finitely supported distributions $\pi$ over experiments. In the literature, such experiments are called discrete choice experiments. 
	We also take as part of our primitive
	the set of ex-ante possible utility functions from which the subject's true utility, $u^*$, was drawn and  
	a probability $\mu$ over $\U$, capturing the prior beliefs of the analyst.\footnote{As this is a normative exercise, we are interested in providing a potential experimenter with guidance on how to construct rational preferences, rather than identifying her prior beliefs; as such we take $\mu$ as part of the primitive. Our theory can be applied, modulo certain technicalities, in the event there is no prior beliefs, as discussed in Section \ref{sec:belief-free}.} 
	
	
	Given an experiment $(A,\P)$ and an observation $P \in \P$, let $\UAP \subseteq \U$ denote those utility functions that would choose an element in $P$ out of $A$. Thus, the observation that an element of $P$ was chosen out of $A$ induces the partial identification of $\UAP$, i.e., the analyst infers that $u^* \in \UAP$.   Our main result shows that Structural Invariance, Identification Separability, and Information Monotonicity\footnote{Along with an an axiom dictating that the analyst is an expected utility maximizer with respect to the randomization across experiments.} hold if and only if the analyst assigns a value, $\tau(W)$, to each possible partial identification, $W \subseteq \U$, and  values experiments according to the expected value of the identification they will yield. Formally, the analyst's preference must be representable by a \emph{expected identification value} functional of the form
	\begin{equation}
		\label{eq:rep2}
		\tag{$\star$}
		F(\pi)=\sum_{\supp(\pi)}\Big(\sum_{P \in \P} \tau(W_{A,P}) \mu(W_{A,P})\Big)\pi(A,\P)
	\end{equation}
	where $\pi$ is a lottery over experiments,  $\tau$ is interpreted as an \emph{identification index} and $\mu$ as the analysts prior.
	
	Since $\tau$ depends only on what can be inferred from the observed outcome, each experiment is equated with the sets of utilities it can partially identify. This reflects Structural Invariance. Moreover, the representation is additive across cells of the partitions and thus, 
	the ranking between two experiments is independent of whatever they commonly identify.
	This reflects Identification Separability. Finally, Information Monotonicity, requires the identification index $\tau$ to always find information weakly beneficial: for all disjoint $W,W' \subseteq \U$ (set $V = W \cup W'$) it must be that 
	$$\tau(W)\mu(W | V) + \tau(W')\mu(W' | V) \geq \tau(V)$$
	where $\mu(\cdot|V)$ is the conditional of $\mu$ given $V$.
	That is, given that the analyst can already identify $V$, the expected value of further learning the distinction between $W$ and $W'$ is weakly positive. 
	
	
	
	\hd{Normative Principles as Guiding Design Choices} To better understand how Structural Invariance and Identification Separability relate to experimental design choices, we now consider two simple examples within the environment of expected utility maximizing subjects. We apply our general theory to this setting in Section \ref{sec:EUmodel}.
	
	In particular, assume that the subject entertains a linear utility function over lotteries defined on three alternatives $\{a,b,c\}$. Denote by $\alpha x+(1-\alpha)y$ the lottery that places probability $\alpha$ on alternative $x$ and $(1-\alpha)$ on $y$, for $x,y \in \{a,b,c\}$. 
	Consider an analyst who needs to choose one of the following two experimental procedures. Both experimental procedures offer two menus to the subject:
	\begin{align*}
		\textsc{exp a}: A\phantom{'}&= \{a, \ \ \tfrac12 a + \tfrac12 b, \ \   \tfrac12 a + \tfrac12c, \ \ \tfrac12 b + \tfrac12c\} 
		\\
		A'&= \{\tfrac{6}{10}b + \tfrac{4}{10}c, \ \  \tfrac{4}{10}b + \tfrac{6}{10}c\}.
		\\
		\textsc{exp b}: B\phantom{'}&= \{a, \ \ b, \ \ c\}
		\\
		B'&=  \{\tfrac23 a + \tfrac13b, \ \ \tfrac23 a + \tfrac13c, \ \ \tfrac13 a + \tfrac13b + \tfrac13c\}.
	\end{align*}
	These are visualized in the top of Figure \ref{fig:normalconesex}. 
	
	Which of these two experiments should the analyst run? At first glance, this appears to be a matter of taste, as it seems plausible the answer should depend on the analyst's objectives, that is, on which aspects of the subject's preference she is interested in identifying. However, the principle of Structural Invariance imposes that these two experiments must be valued equally, as they induce the same possible set of partial identifications. To see this, observe that because expected utility is linear, observing a choice from $A$ and $A'$ is informationally equivalent to observing a single choice from $\{\tfrac12x + \tfrac12x' \mid x \in A, x' \in  A'\}\equiv \tfrac12A + \tfrac12A'$ (and likewise for $B$ and $B'$). Moreover, as shown in the bottom of Figure \ref{fig:normalconesex}, $\tfrac12A + \tfrac12A'$ and $\frac12B + \frac12B'$ yield the same identifiable sets.\footnote{For example, the set of utilities which find $\tfrac12(\frac12a + \tfrac12b) + \tfrac12(\frac{6}{10}b + \tfrac{4}{10}c)$ maximal from $\tfrac12A + \tfrac12A'$ is exactly those that find $\tfrac12b + \tfrac12(\frac23a + \tfrac13b)$ maximal from $\tfrac12B + \tfrac12B'$ (the set $W_1$).}
	
	\begin{figure}
		\centering
		\begin{minipage}{0.24\textwidth}
			\centering
			\begin{tikzpicture}[scale=.43]
				\begin{ternaryaxis}[
					title={\large $A = \{a, \tfrac12 a + \tfrac12 b,  \tfrac12 a + \tfrac12c, \tfrac12 b + \tfrac12c\}$},
					ternary limits relative=false,
					xmin=0,
					xmax=100,
					ymin=0,
					ymax=100,
					zmin=0,
					zmax=100, 
					xlabel={prob(a)},
					ylabel={prob(b)},
					zlabel={prob(c)},
					label style={sloped},
					minor tick num=3,
					grid=both,
					grid style={line width=0.3pt, opacity=.3}, 
					area style,
					tick label style={font=\footnotesize, opacity=.3},
					clip=false,
					disabledatascaling,
					]
					
					\addplot3+[opacity=.2, color=lam1, mark size=3pt,] table {
						x       y       z     
						100      0       0  
						50      50     0   
						0      50       50 
						50      0       50 
					};
					
					\addplot3+[only marks, color=lam1, mark size=3pt] table {
						x       y       z     
						100      0       0  
						50      50     0   
						0      50       50 
						50      0       50 
					};
					
				\end{ternaryaxis}
			\end{tikzpicture}
		\end{minipage}
		\hfill
		\begin{minipage}{0.24\textwidth}
			\centering
			\begin{tikzpicture}[scale=.43]
				\begin{ternaryaxis}[
					title={\large $A' = \{\tfrac{6}{10}b + \tfrac{4}{10}c,  \tfrac{4}{10}b + \tfrac{6}{10}c\}$},
					ternary limits relative=false,
					xmin=0,
					xmax=100,
					ymin=0,
					ymax=100,
					zmin=0,
					zmax=100, 
					xlabel={prob(a)},
					ylabel={prob(b)},
					zlabel={prob(c)},
					label style={sloped},
					minor tick num=3,
					grid=both,
					grid style={line width=0.3pt, opacity=.3}, 
					area style,
					tick label style={font=\footnotesize, opacity=.3},
					clip=false,
					disabledatascaling,
					]
					
					\draw[lam1, ultra thick, opacity=.5] (axis cs:.25, 40.125, 60.125) -- (axis cs:.25, 60.125, 40.125);
					
					\addplot3+[only marks, color=lam1, mark size=3pt,] table {
						x       y       z     
						0      40    60  
						0      60    40   
						
					};
					
				\end{ternaryaxis}
			\end{tikzpicture}
		\end{minipage}
		\hfill
		\begin{minipage}{0.24\textwidth}
			\centering
			\begin{tikzpicture}[scale=.43]
				\begin{ternaryaxis}[
					title={\large $B = \{a,b,c\}$},
					ternary limits relative=false,
					xmin=0,
					xmax=100,
					ymin=0,
					ymax=100,
					zmin=0,
					zmax=100, 
					xlabel={prob(a)},
					ylabel={prob(b)},
					zlabel={prob(c)},
					label style={sloped},
					minor tick num=3,
					grid=both,
					grid style={line width=0.3pt, opacity=.3}, 
					area style,
					tick label style={font=\footnotesize, opacity=.3},
					clip=false,
					disabledatascaling,
					]
					
					\addplot3+[opacity=.2, color=lam1, mark size=3pt,] table {
						x       y       z     
						100      0       0  
						0       100   0
						0       0     100
					};
					
					\addplot3+[only marks, color=lam1, mark size=3pt,] table {
						x       y       z     
						100      0       0  
						0       100   0
						0       0     100
					};
					
				\end{ternaryaxis}
			\end{tikzpicture}
		\end{minipage}
		\hfill
		\begin{minipage}{0.24\textwidth}
			\centering
			\begin{tikzpicture}[scale=.43]
				\begin{ternaryaxis}[
					title={\large $B' = \{\tfrac23 a + \tfrac13b, \tfrac23 a + \tfrac13c, \tfrac13 a + \tfrac13b + \tfrac13c\}$},
					ternary limits relative=false,
					xmin=0,
					xmax=100,
					ymin=0,
					ymax=100,
					zmin=0,
					zmax=100, 
					xlabel={prob(a)},
					ylabel={prob(b)},
					zlabel={prob(c)},
					label style={sloped},
					minor tick num=3,
					grid=both,
					grid style={line width=0.3pt, opacity=.3}, 
					area style,
					tick label style={font=\footnotesize, opacity=.3},
					clip=false,
					disabledatascaling,
					]
					
					\addplot3+[opacity=.2, color=lam1, mark size=3pt,] table {
						x       y       z 
						33.33      33.33    33.33  
						66.66   33.33    00 
						66.66  00    33.33 
					};
					
					\addplot3+[only marks, color=lam1, mark size=3pt,] table {
						x       y       z     
						33.33      33.33    33.33  
						66.66   33.33    00 
						66.66  00    33.33 
					};
					
				\end{ternaryaxis}
			\end{tikzpicture}
		\end{minipage}
		
		\vspace*{3ex}
		
		\begin{minipage}{0.3\textwidth}
			\centering
			\begin{tikzpicture}[scale=.6]
				\begin{ternaryaxis}[
					title={$\frac12A + \frac12A'$},
					ternary limits relative=false,
					xmin=0,
					xmax=100,
					ymin=0,
					ymax=100,
					zmin=0,
					zmax=100, 
					xlabel={\scriptsize prob(a)},
					ylabel={\scriptsize prob(b)},
					zlabel={\scriptsize prob(c)},
					label style={sloped},
					minor tick num=3,
					grid=both,
					grid style={line width=0.3pt, opacity=.3}, 
					area style,
					tick label style={font=\scriptsize, opacity=0},
					clip=false,
					disabledatascaling,
					]
					
					\addplot3+[opacity=.2, color=lam1] table {
						x       y       z 
						50     20     30  
						25     20     55
						00     45     55
						00     55     45
						25     55     20
						50    30     20
					};
					
					\addplot3+[only marks, color=blue, opacity=.4] table {
						x       y       z     
						25     45     30   
						25     30     45   
					};
					
					\addplot3+[only marks, color=lam1] table {
						x       y       z     
						50    30     20
						50     20     30  
						00     45     55
						00     55     45
						25     55     20
						25     20     55
					};
					
					\draw[lam2, dotted, opacity=.8] (axis cs: 50,20) -- (axis cs: 60,0);
					\draw[lam2, dotted, opacity=.8] (axis cs: 50,20) -- (axis cs: 70,10);
					\fill[lam2, dotted, opacity=.2] (axis cs: 50,20) -- (axis cs: 60,0) -- (axis cs: 70,10) -- cycle;
					\node at (axis cs: 60,10) [lam2, opacity=.8] {$W_3$};

					\draw[lam2, dotted, opacity=.8] (axis cs: 25,20) -- (axis cs: 35,0);
					\draw[lam2, dotted, opacity=.8] (axis cs: 25,20) -- (axis cs: 15,10);
					\fill[lam2, dotted, opacity=.2] (axis cs: 25,20) -- (axis cs: 35,0) -- (axis cs: 15,10) -- cycle;
					\node at (axis cs: 25,10) [lam2, opacity=.8] {$W_4$};
					
					\draw[lam2, dotted, opacity=.8] (axis cs: 0,45) -- (axis cs: -10,35);
					\draw[lam2, dotted, opacity=.8] (axis cs: 0,45) -- (axis cs: -20,55);
					\fill[lam2, dotted, opacity=.2] (axis cs: 0,45) -- (axis cs: -10,35) -- (axis cs: -20,55) -- cycle;
					\node at (axis cs: -10,45) [lam2, opacity=.8] {$W_5$};

					\draw[lam2, dotted, opacity=.8] (axis cs: 0,55) -- (axis cs: -10,75);
					\draw[lam2, dotted, opacity=.8] (axis cs: 0,55) -- (axis cs: -20,65);
					\fill[lam2, dotted, opacity=.2] (axis cs: 0,55) -- (axis cs: -10,75) -- (axis cs: -20,65) -- cycle;
					\node at (axis cs: -10,65) [lam2, opacity=.8] {$W_6$};
					
					\draw[lam2, dotted, opacity=.8] (axis cs: 25,55) -- (axis cs: 35,65);
					\draw[lam2, dotted, opacity=.8] (axis cs: 25,55) -- (axis cs: 15,75);
					\fill[lam2, dotted, opacity=.2] (axis cs: 25,55) -- (axis cs: 35,65) -- (axis cs: 15,75) -- cycle;
					\node at (axis cs: 25,65) [lam2, opacity=.8] {$W_1$};

					\draw[lam2, dotted, opacity=.8] (axis cs: 50,30) -- (axis cs: 60,40);
					\draw[lam2, dotted, opacity=.8] (axis cs: 50,30) -- (axis cs: 70,20);
					\fill[lam2, dotted, opacity=.2] (axis cs: 50,30) -- (axis cs: 60,40) -- (axis cs: 70,20) -- cycle;
					\node at (axis cs: 60,30) [lam2, opacity=.8] {$W_2$};

				\end{ternaryaxis}
			\end{tikzpicture}
		\end{minipage}
		\hfill
		\begin{minipage}{0.3\textwidth}
			\centering
			\begin{tikzpicture}[scale=.6]
				\begin{ternaryaxis}[
					title={$\frac12B + \frac12B'$},
					ternary limits relative=false,
					xmin=0,
					xmax=100,
					ymin=0,
					ymax=100,
					zmin=0,
					zmax=100, 
					xlabel={\scriptsize prob(a)},
					ylabel={\scriptsize prob(b)},
					zlabel={\scriptsize prob(c)},
					label style={sloped},
					minor tick num=3,
					grid=both,
					grid style={line width=0.3pt, opacity=.3}, 
					area style,
					tick label style={font=\scriptsize, opacity=0},
					clip=false,
					disabledatascaling,
					]
					
					\addplot3+[opacity=.2, color=lam1] table {
						x       y       z 
						83.33    00     16.66
						83.33    16.66     00
						33.33    66.66     00
						16.66    66.66    16.66
						16.66    16.66    66.66
						33.33     00    66.66
						
					};
					
					\addplot3+[only marks, color=blue, opacity=.4] table {
						x       y       z     
						66.66    16.66    16.66
						33.33    50     16.66
						33.33    16.66      50
					};
					
					\addplot3+[only marks, color=lam1] table {
						x       y       z     
						83.33    16.66     00
						83.33    00     16.66
						33.33    66.66     00
						16.66    66.66    16.66
						33.33     00    66.66
						16.66    16.66    66.66
					};
					
					\draw[lam2, dotted, opacity=.8] (axis cs: 83.33,0) -- (axis cs: 93.33,-20);
					\draw[lam2, dotted, opacity=.8] (axis cs: 83.33,0) -- (axis cs: 103.33,-10);
					\fill[lam2, dotted, opacity=.2] (axis cs: 83.33,0) -- (axis cs: 93.33,-20) -- (axis cs: 103.33,-10) -- cycle;
					\node at (axis cs: 93.33,-10) [lam2, opacity=.8] {$W_3$};
					
					\draw[lam2, dotted, opacity=.8] (axis cs: 33.33,0) -- (axis cs: 43.33,-20);
					\draw[lam2, dotted, opacity=.8] (axis cs: 33.33,0) -- (axis cs: 23.33,-10);
					\fill[lam2, dotted, opacity=.2] (axis cs: 33.33,0) -- (axis cs: 43.33,-20) -- (axis cs: 23.33,-10) -- cycle;
					\node at (axis cs: 33.33,-10) [lam2, opacity=.8] {$W_4$};
					
					\draw[lam2, dotted, opacity=.8] (axis cs: 16.66,16.66) -- (axis cs: 06.66,06.66);
					\draw[lam2, dotted, opacity=.8] (axis cs: 16.66,16.66) -- (axis cs: -3.66,26.66);
					\fill[lam2, dotted, opacity=.2] (axis cs: 16.66,16.66) -- (axis cs: 6.66,6.66) -- (axis cs: -3.66,26.66) -- cycle;
					\node at (axis cs: 6.55,16.66) [lam2, opacity=.8] {$W_5$};
					
					\draw[lam2, dotted, opacity=.8] (axis cs: 16.66,66.66) -- (axis cs: 06.66,86.66);
					\draw[lam2, dotted, opacity=.8] (axis cs: 16.66,66.66) -- (axis cs: -03.66,76.66);
					\fill[lam2, dotted, opacity=.2] (axis cs: 16.66,66.66) -- (axis cs: 6.66,86.66) -- (axis cs: -03.66,76.66) -- cycle;
					\node at (axis cs: 6.55,76.66) [lam2, opacity=.8] {$W_6$};
					
					\draw[lam2, dotted, opacity=.8] (axis cs: 33.33,66.66) -- (axis cs: 43.33,76.66);
					\draw[lam2, dotted, opacity=.8] (axis cs: 33.33,66.66) -- (axis cs: 23.66,86.66);
					\fill[lam2, dotted, opacity=.2] (axis cs: 33.33,66.66) -- (axis cs: 43.33,76.66) -- (axis cs: 23.66,86.66) -- cycle;
					\node at (axis cs: 33.44,76.66) [lam2, opacity=.8] {$W_1$};
					
					\draw[lam2, dotted, opacity=.8] (axis cs: 83.33,16.66) -- (axis cs: 93.33,26.66);
					\draw[lam2, dotted, opacity=.8] (axis cs: 83.33,16.66) -- (axis cs: 103.33,6.66);
					\fill[lam2, dotted, opacity=.2] (axis cs: 83.33,16.66) -- (axis cs: 93.33,26.66) -- (axis cs: 103.33,6.66) -- cycle;
					\node at (axis cs: 93.33,16.66) [lam2, opacity=.8] {$W_2$};

				\end{ternaryaxis}
			\end{tikzpicture}
		\end{minipage}
		\hfill
		\begin{minipage}{0.3\textwidth}
			\centering
			\begin{tikzpicture}[scale=.6]
				\begin{axis}[
					title = {$\U \cong \R^2$},
					axis lines=middle, 
					axis equal,        
					xmin=-10, xmax=10, 
					ymin=-10, ymax=10, 
					xlabel=\empty,
					ylabel=\empty,
					xticklabel=\empty,
					yticklabel=\empty,
					grid=both,
					enlargelimits={abs=0.5}, 
					domain=-10:10,
					samples=100,
					minor tick num=5,
					grid=both,
					grid style={line width=0.3pt, opacity=.3}, 
					area style,
					axis line style={opacity=.1}, 
					tick style={opacity=0},
					]
					
					\addplot[thick, lam2, dotted,] coordinates {(0,-10) (0,10)};
					
					\addplot[thick, dotted, lam2, domain=-10:10] ({x}, {x * tan(30)});
					
					\addplot[thick, lam2, dotted, domain=-10:10] ({x}, {x * tan(-30)});
					
					\addplot[only marks, color=lam2] (0,0);
					\node at (axis cs: 3,5) [lam2, opacity=.8] {$W_3$};
					\node at (axis cs: 5,0) [lam2, opacity=.8] {$W_4$};
					\node at (axis cs: 3,-5) [lam2, opacity=.8] {$W_5$};
					\node at (axis cs: -3,-5) [lam2, opacity=.8] {$W_6$};
					\node at (axis cs: -5,0) [lam2, opacity=.8] {$W_1$};
					\node at (axis cs: -3,5) [lam2, opacity=.8] {$W_2$};
					
				\end{axis}
			\end{tikzpicture}
		\end{minipage}
		\caption{\emph{Top}: The four decision problems in experiments \rexp{a} and \rexp{b} represented in the simplex. The convex hull of the decision problems is shaded.  \emph{Bottom}: The identifiable sets from $\tfrac12A + \tfrac12A'$ and $\tfrac12B + \tfrac12B'$. These form the same partition of $\U$ as shown in the third panel.}
		\label{fig:normalconesex}
	\end{figure}
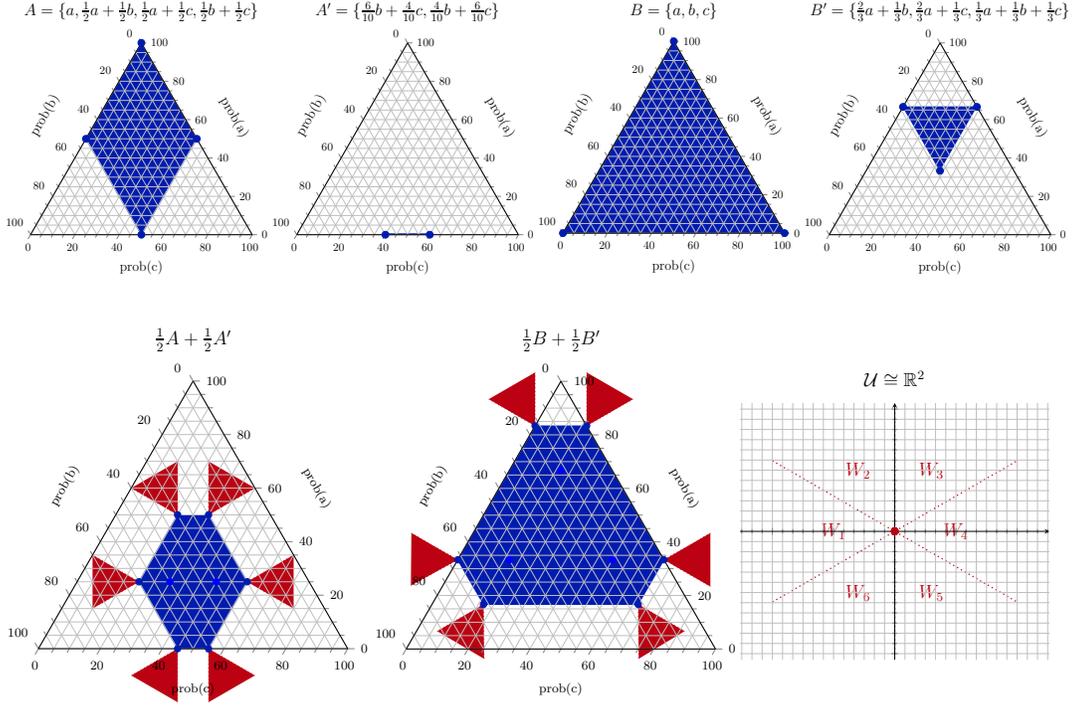
	
	Within the domain of linear utility, Structural Invariance is captured by a translation invariance axiom, which arises from the specific characteristics of the environment. Conceptually, Structural Invariance states that changing aspects of an experiment that do not affect what it can identify should not change its value; linear translations are the specific structural property invariant for linear utilities. 
	As such, applying our results for expected utility (Section \ref{sec:EUmodel}) to non-linear models (e.g. ambiguity averse utility functions) only requires identifying the appropriate invariance axiom.
	
	To understand Identification Separability, consider the following four partitions of the decision problem $A$ (from the earlier example); these partitions are shown in Figure \ref{fig:partionsIdenSep}.
	\begin{align*}
		\P\phantom{'} &= \big\{ \{a\}, \ \{\tfrac12 a + \tfrac12 b\}, \ \{ \tfrac12 a + \tfrac12c \}, \{ \tfrac12 b + \tfrac12c\} \big\} \\
		\P' &= \big\{ \{a, \ \tfrac12 a + \tfrac12 b\}, \ \{ \tfrac12 a + \tfrac12c ,  \tfrac12 b + \tfrac12c\} \big\} \\
		\Q\phantom{'} &= \big\{ \{a, \ \tfrac12 a + \tfrac12 b\}, \ \{ \tfrac12 a + \tfrac12c \}, \{ \tfrac12 b + \tfrac12c\} \big\} \\
		\Q' &= \big\{ \{a\}, \ \{\tfrac12 a + \tfrac12 b\}, \ \{ \tfrac12 a + \tfrac12c , \ \tfrac12 b + \tfrac12c\} \big\} \\
	\end{align*}
	Based on these decision problems, the analyst considers two randomized experiments:
	\begin{itemize}[topsep=5pt,itemsep=-.2em,leftmargin=12ex]
		\item[\textsc{exp c}\ :]\label{exp:c}  The analyst offers $(A,\P)$ and  $(A,\P')$ each with probability with $\frac12$.
		\item[\textsc{exp d}\ :]\label{exp:d} The analyst offers $(A,\Q)$ and  $(A,\Q')$ each with probability with $\frac12$.
	\end{itemize}

	\begin{figure}
		\centering
		\begin{minipage}{0.24\textwidth}
			\centering
			\begin{tikzpicture}[scale=.43]
				\begin{ternaryaxis}[
					title={\Large $\P$},
					ternary limits relative=false,
					xmin=0,
					xmax=100,
					ymin=0,
					ymax=100,
					zmin=0,
					zmax=100, 
					xlabel={prob(a)},
					ylabel={prob(b)},
					zlabel={prob(c)},
					label style={sloped},
					minor tick num=3,
					grid=both,
					grid style={line width=0.3pt, opacity=.3}, 
					area style,
					tick label style={font=\footnotesize, opacity=.3},
					clip=false,
					disabledatascaling,
					]
					
					\addplot3+[opacity=.2, color=lam1, mark size=3pt,] table {
						x       y       z     
						100      0       0  
						50      50     0   
						0      50       50 
						50      0       50 
					};
					
					\addplot3+[only marks, color=lam1, mark size=3pt] table {
						x       y       z     
						100      0       0  
						50      50     0   
						0      50       50 
						50      0       50 
					};
					
				\end{ternaryaxis}
			\end{tikzpicture}
		\end{minipage}
		\hfill\begin{minipage}{0.24\textwidth}
			\centering
			\begin{tikzpicture}[scale=.43]
				\begin{ternaryaxis}[
					title={\Large $\P'$},
					ternary limits relative=false,
					xmin=0,
					xmax=100,
					ymin=0,
					ymax=100,
					zmin=0,
					zmax=100, 
					xlabel={prob(a)},
					ylabel={prob(b)},
					zlabel={prob(c)},
					label style={sloped},
					minor tick num=3,
					grid=both,
					grid style={line width=0.3pt, opacity=.3}, 
					area style,
					tick label style={font=\footnotesize, opacity=.3},
					clip=false,
					disabledatascaling,
					]
					
					\draw [rotate=60, red, ultra thick, dotted] (axis cs: 0,25) ellipse (2.2cm and .4cm); 
					\draw [rotate=60, red, ultra thick, dotted] (axis cs: -50,75) ellipse (2.2cm and .4cm);

					\addplot3+[opacity=.2, color=lam1, mark size=3pt,] table {
						x       y       z     
						100      0       0  
						50      50     0   
						0      50       50 
						50      0       50 
					};
					
					\addplot3+[only marks, color=lam1, mark size=3pt] table {
						x       y       z     
						100      0       0  
						50      50     0   
						0      50       50 
						50      0       50 
					};
					
				\end{ternaryaxis}
			\end{tikzpicture}
		\end{minipage}
		\hfill\begin{minipage}{0.24\textwidth}
			\centering
			\begin{tikzpicture}[scale=.43]
				\begin{ternaryaxis}[
					title={\Large $\Q$},
					ternary limits relative=false,
					xmin=0,
					xmax=100,
					ymin=0,
					ymax=100,
					zmin=0,
					zmax=100, 
					xlabel={prob(a)},
					ylabel={prob(b)},
					zlabel={prob(c)},
					label style={sloped},
					minor tick num=3,
					grid=both,
					grid style={line width=0.3pt, opacity=.3}, 
					area style,
					tick label style={font=\footnotesize, opacity=.3},
					clip=false,
					disabledatascaling,
					]
					
					\draw [rotate=60, red, ultra thick, dotted] (axis cs: 0,25) ellipse (2.2cm and .4cm); 
					
					\addplot3+[opacity=.2, color=lam1, mark size=3pt,] table {
						x       y       z     
						100      0       0  
						50      50     0   
						0      50       50 
						50      0       50 
					};
					
					\addplot3+[only marks, color=lam1, mark size=3pt] table {
						x       y       z     
						100      0       0  
						50      50     0   
						0      50       50 
						50      0       50 
					};
					
				\end{ternaryaxis}
			\end{tikzpicture}
		\end{minipage}
		\hfill\begin{minipage}{0.24\textwidth}
			\centering
			\begin{tikzpicture}[scale=.43]
				\begin{ternaryaxis}[
					title={\Large $\Q '$},
					ternary limits relative=false,
					xmin=0,
					xmax=100,
					ymin=0,
					ymax=100,
					zmin=0,
					zmax=100, 
					xlabel={prob(a)},
					ylabel={prob(b)},
					zlabel={prob(c)},
					label style={sloped},
					minor tick num=3,
					grid=both,
					grid style={line width=0.3pt, opacity=.3}, 
					area style,
					tick label style={font=\footnotesize, opacity=.3},
					clip=false,
					disabledatascaling,
					]
					
					\draw [rotate=60, red, ultra thick, dotted] (axis cs: -50,75) ellipse (2.2cm and .4cm); 
					
					\addplot3+[opacity=.2, color=lam1, mark size=3pt,] table {
						x       y       z     
						100      0       0  
						50      50     0   
						0      50       50 
						50      0       50 
					};
					
					\addplot3+[only marks, color=lam1, mark size=3pt] table {
						x       y       z     
						100      0       0  
						50      50     0   
						0      50       50 
						50      0       50 
					};
					
				\end{ternaryaxis}
			\end{tikzpicture}
		\end{minipage}

		\caption{The four partitions of $A$ from \rexp{c} and \rexp{d}.}
		\label{fig:partionsIdenSep}
	\end{figure}
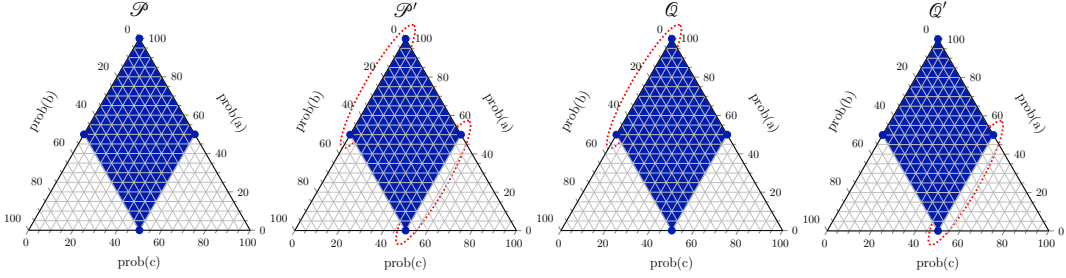

	As in the previous example, what might seem to depend on the aims of the analyst is in fact dictated by criteria of rational design; the principle of Identification Separability requires these two experiments are valued equally. To see this, notice that under \rexp{c}, with probability $\frac12$ the analyst learns exactly which element was chosen, and with the remaining probability $\frac12$, she learns only which cell of $\P'$ contained the chosen alternative. Although less immediate, this is also the case for \rexp{d}. Indeed, let $W \subset \U$ denote the set of utilities such that
	the subject will choose an element of the first cell of $\P'$ (i.e., will choose either $a$ or $\tfrac12 a + \tfrac12 b$) and $W^c \subset \U$ those utilities such that the subject will choose an element of the second (i.e., either $\tfrac12 a + \tfrac12c$ or $\tfrac12 b + \tfrac12c$).\footnote{Using the notation from the bottom of Figure \ref{fig:normalconesex}, $W = W_1 \cup W_2 \cup W_3$ and $W^c = W_4 \cup W_4 \cup W_5$; notice we are excluding the possibility of the subject being indifferent between alternatives. Following the literature on random utility, we assume such ties occur with zero probability. See section \ref{sec:genmodel}.} Then, conditional on $u^* \in W$, the subject will choose an element out of the first cell of $\P'$: under $\Q$ this is all that is observed, while under $\Q'$ the choice is observed perfectly. Conditional on $u^* \in W^c$, the same logic applies: $\Q$ perfectly reveals the subject's choice, while $\Q'$ only that the second cell of $\P'$ contains the chosen alternative.
	
	Thus, both \rexp{c} and \rexp{d} reveal the subject's choice half of the time and the cell of $\P'$ containing his choice the other half. The difference between these two experiments is that in the latter, the amount of information revealed is correlated with the the subject's utility type. That is, if \rexp{d} ends up perfectly revealing the subject's choice when $u^* \in W$, we know that it \emph{would not have done} had  $u^* \in W^c$, and vice versa. The principle of Identification Separability dictates that such counterfactual assessments are irrelevant, and thus, that the two randomized experiments are valued equally.

	
	\hd{Functional Forms} The expected identification value representation is flexible enough to accommodate many Bayesian theories of optimal experimental design. 
	For instance, by taking
	$$\tau(W)=-\log(\mu(W)),$$
	the value of an experiment is its expected reduction in entropy relative to the prior \citep{cover1991entropy}. We axiomatize this special case in Section \ref{sec:entropy}. \cite{drake2022bayesian} propose a dynamic Bayesian procedure for preference identification on the basis of this functional form.
	Another special case of our index comes from hypothesis testing. An analyst who wishes to test if the subjects preference is in some set $W^*$ would consider 
	\begin{equation*}
		\tau(W)  = 
		\begin{dcases}
			1 &\text{ if } W \subseteq W^* \text{ or } W^* \subseteq W^c  \\
			0 &\text{ otherwise }.
		\end{dcases}
	\end{equation*}
	
	Finally, we observe that although our theory is meant to contribute to the conceptual understanding of choice experiments, it is flexible enough to allow for functional forms unrelated to experimental design. For instance, a Bayesian principal may only want to promote agents with similar preferences to herself. Hence, she conducts a test to see what kind of preferences her agents have when making promotion decisions. The following specification allows for such interpretation
	$$\tau(W)=\max_{a\in\{0,1\}}\integral_{W}\xi(a,u)d\mu.$$
	where $a=1$ ($a=0$) is interpreted as (not) promoting the agent and $\xi(a,u)$ is her utility of promoting an agent that has preference $u$ and $\mu$ is her prior over the agent's preference.
	
	\hd{Outline} The paper proceeds as follows: The introduction concludes with a review of the relevant literature. The model is presented in Section \ref{sec:genmodel}. Our normative principles and main representation result are in Section \ref{sec:axioms}. 
	Sections \ref{sec:EUmodel} and \ref{sec:entropy} discuss the special cases of Expected Utility maximizing subjects and entropy reduction, respectively.
	Section \ref{sec:belief-free} outlines a version of the model without prior beliefs.
	Finally, Section \ref{sec:partial-observability} concludes by showing how our framework is general enough to capture dynamic experiments. All proofs are collected in the Appendix.
	
	\subsection{Related Literature}
	This paper joins the large literature in economics on eliciting preferences from observable behavior.
	It differs from most of the literature as it does not focus on efficiency of a particular elicitation method, but on what are the minimal properties a method should satisfy in order to be considered rational. 
	Such questions have been suggested in the statistics literature on Bayesian experimental design. Early texts such as \cite{raiffa2000applied} and \cite{lindley1972bayesian} propose a utility function for Bayesian experimenters. 
	The literature that followed provided specializations of the general function to feet more stylized settings such as regression analysis and model discrimination analysis (see \cite{chaloner1995bayesian} for a review of the literature). Thus, our work complements the existing literature by providing a framework to discuss experimental design as well as the missing axiomatic foundation.      
	
	Our framework is inspired by the literature in economics on Discrete Choice Experiments (henceforth DCE). 
	DCE's were initially developed by \cite{louviere1982design}. They are based on the theoretical framework of the Random Utility Model (\cite{luce1959possible} and \cite{mcfadden1973precedence}). 
	We contribute to the DCE literature by providing a unifying framework to analyse deviations from the standard DCE method. Given the recent interest in employing dynamic procedures to substitute DCE's, our results can be used as a test for such procedures. If they do not satisfy our axioms, they should not be employed.
	
	Outside of the DCE literature but within the random utility literature, \cite{gul2006random} (henceforth GP) provide necessary and sufficient conditions for random choice data to be consistent with \emph{random} expected utility. We use their work as a building block in providing foundations to Bayesian procedures. Specifically, our richness condition described in Section \ref{sec:EUmodel} are direct consequences of the GP assumptions.
	
	\cite{gilboa1991value} studies a related problem to ours. Their goal is to provide axiomatic foundations for functions over partitions of states that can be interpreted as describing the value of information for some Bayesian agent. Our analysis can also be interpreted as providing foundations for functions that can be interpreted as describing the value of additional information of an agent's preference. There are two key differences. First, we take as observable preference over experiments as opposed to a function over partitions of the utility space, which would be the analog of their domain in our setting. Second, we do not look for identification functions for which there exists a Bayesian experimenter that may employ them. Indeed, we do not take rationality as given and look for identification functions that satisfy it. We propose a notion of rationality and characterize the set of identification functions that satisfy it.
	
	Finally, our work also contributes to the preference over menus in decision theory. Starting with \cite{gul2001temptation} and \cite{dekel2001representing}, economists have used preferences over menus of lotteries to study distinct phychological phenomena such as temptation and self-control. The literature then generalized the domain to lotteries over menus of lotteries to obtain sharper results (\cite{stovall2018temptation} and \cite{ergin2015hidden}). Our work shows that lotteries over menus can also be employed to analyze experimental design. Thus, it suggests that some of the earlier work in decision theory may be useful for experimental design.
	
	\section{General Model} \label{sec:genmodel}
	
	\subsection{Preliminaries} 
	
	An \emph{abstract experimental environment} is a tuple $(Z,\U, \W, \mu)$ where $Z$ is some set of possible choice alternatives, $\U \subseteq \{u: Z \to R\}$ a set of utility types, $\W$ is an algebra of measurable sets of $\U$ and $\mu$ probability distribution over $(\U,\W)$. We interpret $(Z,\U, \W, \mu)$ as the theory the analyst has about the subject's preferences.
	A \emph{decision problem} $A$ is a finite subset of $Z$. Let $\D$ denote the set of all decision problems. 
	
	Given a decision problem $A$ and some $B \subseteq A$, let
	$$W_{A,B} = \{u \in \U, B \cap \argmax_{x \in A} u(x) \neq \emptyset\}$$
	denote the set of utilities for which some element of $B$ is a maximizer when facing the decision problem $A$. Intuitively, $W_{A,B}$ is the set of preferences that would choose an element in $B$ when facing menu $A$. 
	
	To achieve her goal, the analyst can can offer the subject a decision problem from which she will observe the subject's choice. 
	While it is plausible that a subject's behavior can be observed perfectly in static laboratory conditions, in dynamic settings or more general environments (i.e., field experiments, consumer testing in industry, data collection by online platforms, etc.), the analyst may only be able to partially observe choice. To allow for such constraints, we define an experiment as a decision problem and a partition.\footnote{A partition $\P$ of $X$ is a set of subsets of $X$ that are mutually disjoint and whose union is $X$.} The interpretation is that $P \in \P$ represents what the analyst observes when the subject's choice out of $A$ is contained in $P$. 
	
	Formally, an \emph{experiment} $e = (A,\P)$ is a pair where where $A \in \D$ and $\P$ is a partition of $A$ such that for any $P,Q \in \P$
	\begin{enumerate}[label=(\texttt{E\arabic*}), itemsep=-.4em]
		\item\label{exp:measure} $W_{A,P} \in \W$
		\item\label{exp:nooverlap} $\mu(W_{A,P} \cap W_{A,Q}) = 0$
	\end{enumerate}
	The first requirement states that analyst assigns a prior probability to each observable outcome, and the second states that the analyst can unambiguously interpret the observed outcome. Specifically, the analyst places $\mu$-probability 0 on the subject being indifferent between two alternatives in $A$ so that observed choice can be interpreted without worrying about how ties are broken.
	
	Our notion of experiments can be used to define partial identification: A set of preferences $W\subseteq \U$ is \emph{identifiable} in $(A,\P)$ if $W=\UAP$ for some $P\in \P$. Given an experiment $(A,\P)$, the analyst can calculate the family $\{\UAP|P\in \P\}$, the sets of preferences that are identifiable by the experiment. 
	
	Call two (finite) collections of subsets of $\U$, $\{W_1, \ldots W_n\}$ and $\{V_1, \ldots W_m\}$ \emph{$\mu$-equivalent} if for every $W_i$ with $\mu(W_i) > 0$ there exists a $V_j$ such that $\mu(W_i) = \mu(W_i \cap V_j) = \mu(V_j)$, and vice versa. That is, the collections are $\mu$-equivalent if they are equal up to measure 0 sets.
	
	We assume the analyst has access to a set of experiments $\E$ that satisfies the following two properties:
	\begin{itemize}
		\item $(A,\P) \in \E$ and $\Q$ is a coarsening of $\P$ then $(A,\Q) \in \E$,
		\item For any finite $\W$-measurable partition $\mathcal{W}$ of $\U$, there is a some experiment $(A,\P) \in \E$ such that $\{\UAP|P\in \P\}$ is $\mu$-equivalent to $\mathcal{W}$.
	\end{itemize}
	The first property states that if $(A,\P)$ is feasible, then an experiment that potentially identifies less utilities is also feasible. The second property demands that for any finite partition of the utility space, the analyst can always find an experiment that would induce such a partition. We call such sets of experiments \emph{rich}.
	

	Given a set of experiments $\E$, a \emph{randomized experiment} (over $\E$) $\pi$ is a finitely supported probability distribution over $\E$. The set of all randomized experiments over $\E$ is denoted by $\Pi(\E)$. For a given randomized experiment, $\pi$, let $\supp(\pi) = \{e \in \E \mid \pi(e) > 0\}$ denote the support of $\pi$. Our primitive is the analyst's preference, $\s$, over the set of all randomized experiments over some rich set of experiments.  
	
	\subsection{Representation}
	
	A \emph{expected identification value representation} for $\s$ is the following:
	\begin{equation}
		\label{eq:rep2}
		\tag{$\star\star$}
		F(\pi)=\sum_{\supp(\pi)}\Big( \sum_{P \in \P} \tau(W_{A,P}) \mu(W_{A,P}) \Big) \pi(A,\P)
	\end{equation}
	where $\tau: \Omega \to \R$ satisfies
	\begin{enumerate}[label=(\texttt{T\arabic*}), itemsep=-.4em]
		\item\label{tau:subadd} For all non-$\mu$-null $V$ and $W \subseteq V$, 
		$$\tau(W)\mu(W\, |\, V) + \tau(V \setminus W)(1-\mu(W\, |\, V)) \geq \tau(V),$$
		with equality holding whenever $\mu(W) = 0$.
		\item\label{tau:empty}$\tau(\U) = 0$. 
	\end{enumerate}
	
	Condition \ref{tau:subadd} states that information is never bad for the analyst. Indeed, consider an analyst who has already made the identification $V \subseteq \U$---that is, who already knows that the subject's preference is contained in $V$---and is contemplating the value of an additional observation that would reveal if the subject's preference is in $W$. The current value of her identification is $\tau(V)$. If she learns the additional observation, the total value will depend on if the subject's preference lies in $W$ or not, resulting in $\tau(W)$ or $\tau(V \setminus W)$ respectively. According to her beliefs, the former occurs with probability $\mu(W\, |\, V)$ and the latter with probability $1-\mu(W\, |\, V)$. Thus, Condition \ref{tau:subadd} requires the expected value of this further information is (weakly) positive. Notice that if $\tau(W) \geq \tau(V)$ whenever $W \subseteq V$, then the constraint follows immediately. 
	
	In many cases, the analyst may not entertain a prior over $\U$. Nonetheless, our theory applies almost exactly. In this case, the value function $F$ can be written as $$F(A,\P)=\sum_{P \in \P} \nu(W_{A,P}),$$ for an abstract identification index $\nu$ which does not separate the intrinsic value of identification from its likelihood. This is akin to the failure of separation into tastes and beliefs in state-dependent expected utility. In this case, $\nu$ must be sub-additive to imbue a positive value for information.
	
	\section{Normative Principles of Experimental Design}
	\label{sec:axioms}

	If $\P$ is a partition of some set $X$ and $Y \subseteq X$, then $\P|_Y = \{P \cap Y \mid P \in \P\}$ is a partition of $Y$; we denote the corresponding (possibly empty) cells as $P|_Y$. 
	If $\P$ and $\Q$ are both partitions of the same set $X$ and $Y \subseteq X$ is measurable with respect to both $\P$ and $\Q$ then $\P_Y\Q$ denotes the partition that coincides with $\P$ over $Y$ and with $\Q$ over $X \setminus Y$.
	
	We impose four axioms on $\s$, the first of which requires that it admits an expected utility representation. This axiom is not expressed in terms of its individual choices, as its behavioral foundations are widely known.
	
	\begin{ax}{A}{Expected Utility}{eu}
		$\s$ entertains an expected utility representation.
	\end{ax}
	
	The following three axioms reflect our normative principles. Recall that our first principle, Information Monotonicity, asserts that the analyst has a preference for sharper identification. In the current domain, this amounts to assuming finer partitions will always be weakly preferred..

	\begin{ax}{A}{Monotonicity}{mon}
		For $A \in \D$, and partitions $\P$, $\Q$ of $A$ it follows that
		$$(A, \P) \s (A,\Q)$$ whenever $\P$ is finer than $\Q$.
	\end{ax}
	
	Our second principle maintains that the value of an experiment should only depend on what is potentially identifiable. Thus, it requires indifference between two experiments that have the same ex-ante identifiable set's of preferences (up to $\mu$-probability 0 events).
	
	\begin{ax}{A}{Structural Invariance}{si}
		Let $(A,\P)$ and $(B,\Q)$ be such that $\{\UAP|P\in \P\}$ is $\mu$-equivalent to $\{W_{B,Q} |Q \in \Q\}$.
		Then $(A,\P) \sim (B,\Q)$.
	\end{ax}
	
	
	Finally, Identification Separability demands that the value of some partial identification cannot depend on the counterfactual. We take advantage of our lottery domain to capture this. Specifically, fix a decision problem $A$ and partitions $\P$ and $\Q$ of $A$. 
	Identification Separability requires that for any subset $B \subseteq A$, the value of identification given $(A,\P)$ and $(A,\Q)$, conditional that the choice is in $B$,  should only depend on $\P|_{B}$ and $\Q|_{B}$, respectively. 
	Hence, if the agent will choose an element of $B$, a radomized experiment between  $(A,\P)$ and $(A,\Q)$ should be indifferent to a randomized experiment between  $(A, \P_{B}\Q)$ and $(A,\Q_{B}\P)$.
	
	\begin{ax}{A}{Identification Separability}{is}
		For $A \in \D$, partitions $\P$, $\Q$ of $A$, and $B\subseteq A$ measurable with respect to both $\P$ and $\Q$
		$$\tfrac12(A, \P) + \tfrac12(A,\Q) \sim \tfrac12(A, \P_{B}\Q) + \tfrac12(A,\Q_{B}\P).$$
	\end{ax}
	
	Requiring that the value of an object that is uncertain does not depend on the counterfactual is a well known implication of Dynamic Consistency and Consequentialism. As we now elaborate, our axiom is a direct implication of these requirements.  
	
	Consider an extension of the analyst's preferences to the case in which she knows choice out of $A$ will be contained in $B$, denoted $\s_B$, and the case in which she knows the opposite, denoted $\s_{B^c}$. If the choice is in $B$, then in terms of preference identification, the experiment $(A,\P)$ is equivalent to $(A,\P_B\Q)$ and $(A,\Q)$ to $(A,\Q_B\P)$. Analogously, if the choice is not in $B$, $(A,\Q)$ is equivalent to $(A,\P_B\Q)$ and $(A,\P)$ to $(A,\Q_B\P)$. Hence, if the analyst's preference do not depend on the counterfactual (consequentialism), then 
	\begin{align*}
		(A,\P) \sim_B (A,\P_B\Q)&\quad \mbox{ and }\quad (A,\Q)\sim_B(A,\Q_B\P); \\
		(A,\Q)\sim_{B^c}(A,\P_B\Q)&\quad \mbox{ and } \quad (A,\P)\sim_{B^c}(A,\Q_B\P).
	\end{align*}
	Therefore, under Independence, 
	\begin{align*}
		\tfrac12(A,\P) +\tfrac12(A,\Q) &\sim_B \tfrac12(A,\P_B\Q)+\tfrac12(A,\Q_B\P) \\
		\tfrac12(A,\P) +\tfrac12(A,\Q)&\sim_{B^c} \tfrac12(A,\P_B\Q)+\tfrac12(A,\Q_B\P).
	\end{align*}
	Finally, observe that if the analyst's ex-ante preference respects her conditional preferences (dynamic consistency), she must exhibit
	$$\tfrac12(A, \P) + \tfrac12(A,\Q) \sim \tfrac12(A, \P_{B}\Q) + \tfrac12(A,\Q_{B}\P).$$
	
	
	These four axioms---\r{ax:eu} providing the expected utility structure, and \r{ax:mon}--\r{ax:is} capturing our three normative principles---characterize expected identification value maximization.
	
	\begin{theorem}
		\label{thm:rep}
		The preference $\s$ satisfies \r{ax:eu}--\r{ax:is} if and only if it has an expected identification value representation.
	\end{theorem}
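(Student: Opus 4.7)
The \emph{necessity} direction is a direct verification. The functional \ref{eq:rep2} is linear in $\pi$, giving \r{ax:eu}; the inner sum $\sum_{P\in\P}\tau(W_{A,P})\mu(W_{A,P})$ depends only on the $\mu$-equivalence class of $\{W_{A,P}\}$, giving \r{ax:si}; cell-additivity means swapping the restrictions of $\P$ and $\Q$ on $B$ versus $A\setminus B$ leaves the total unchanged, giving \r{ax:is}; and iterating \ref{tau:subadd} along each refined cell delivers \r{ax:mon}.

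For \emph{sufficiency}, the plan has three stages. Stage one converts the preference into a function on partitions of $\U$. By \r{ax:eu}, $\s$ admits a vNM representation $\pi\mapsto\sum_e u(e)\pi(e)$ with $u:\E\to\R$ unique up to positive affine transformations. By \r{ax:si}, $u(A,\P)$ depends only on the $\mu$-equivalence class of $\{W_{A,P}\}$, and the two richness conditions on the domain of feasible experiments together guarantee that every finite measurable partition $\mathcal{W}$ of $\U$ is induced by some feasible experiment and that all of its coarsenings are feasible as well. Hence I can define $G(\mathcal{W}):=u(A,\P)$ unambiguously for any $(A,\P)$ inducing $\mathcal{W}$, and I use the affine freedom in $u$ to normalize $G(\{\U\})=0$.

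Stage two derives the additive decomposition of $G$. Under the linear form of $u$, \r{ax:is} becomes the separability equation
\begin{equation*}
G(\mathcal{W}_1)+G(\mathcal{W}_2)=G(\mathcal{W}_1|_V\cup\mathcal{W}_2|_{\U\setminus V})+G(\mathcal{W}_2|_V\cup\mathcal{W}_1|_{\U\setminus V})
\end{equation*}
whenever $V$ is a union of cells of both partitions; richness ensures that any such pair can be realized as partitions of a common $A$ by taking an experiment inducing the common refinement $\mathcal{W}_1\wedge\mathcal{W}_2$ and coarsening it in two ways. The claim---this is where I expect the main technical work---is that this equation, together with the normalization, forces $G(\mathcal{W})=\sum_{W\in\mathcal{W}}\psi(W)$ for some $\psi$ defined on measurable subsets of $\U$ with $\psi(\U)=0$. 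I plan to construct $\psi$ by induction on partition cardinality: two-cell partitions give the pairing $\psi(W)+\psi(\U\setminus W)=G(\{W,\U\setminus W\})$, and the separability equation, applied with one partition binary and the other the merge-coarsening of an intended refinement, yields enough cell-local constraints to pin $\psi$ down consistently across all partitions, with the single remaining degree of freedom absorbed by the $\psi(\U)=0$ normalization.

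Stage three assembles the representation and verifies \ref{tau:subadd} and \ref{tau:empty}. Setting $\tau(W):=\psi(W)/\mu(W)$ for non-null $W$ (and arbitrarily elsewhere, since null cells contribute nothing) gives $G(\mathcal{W})=\sum_{W\in\mathcal{W}}\tau(W)\mu(W)$, and substituting back into the vNM representation of $u$ yields the expected identification value form. Property \ref{tau:empty} holds by normalization. For \ref{tau:subadd}, take non-null $V$ and $W\subseteq V$: by \r{ax:mon} applied to experiments inducing $\{V,\U\setminus V\}$ and its refinement $\{W,V\setminus W,\U\setminus V\}$, $G(\{W,V\setminus W,\U\setminus V\})\geq G(\{V,\U\setminus V\})$, which, after substituting the representation and dividing by $\mu(V)$, is exactly the required inequality; the equality statement when $\mu(W)=0$ follows because null cells drop out of the sum.
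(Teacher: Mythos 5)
Your architecture is the same as the paper's: vNM index, then \r{ax:si} plus richness to push the index down to a function $G$ on measurable partitions of $\U$, then \r{ax:is} to get an additive decomposition $G(\mathcal{W})=\sum_{W\in\mathcal{W}}\psi(W)$, then $\tau=\psi/\mu$ with \r{ax:mon} delivering \ref{tau:subadd}. Two remarks on Stage two. First, the additive-decomposition step that you flag as ``the main technical work'' is precisely Theorem 3.2 of Gilboa--Lehrer (1991), which the paper invokes rather than reproves: your swap identity is their partial-additivity condition for non-intersecting partitions ($\mathcal{W}_1|_V\cup\mathcal{W}_2|_{V^c}$ and $\mathcal{W}_2|_V\cup\mathcal{W}_1|_{V^c}$ are exactly the meet and join when one partition refines the other on $V$ and coarsens it on $V^c$), and your induction sketch is too thin to count as a proof of that result. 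This is an acknowledged omission rather than an error, but you should either carry out the induction or cite the result.

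The genuine gap is in Stage three. The additive decomposition $\psi$ is not unique --- it is determined only up to adding a finitely additive signed measure of total mass zero --- and you need to select a version with $\psi(W)=0$ for every $\mu$-null $W$. Your parenthetical ``arbitrarily elsewhere, since null cells contribute nothing'' is false as stated: null cells contribute nothing to $\sum_W\tau(W)\mu(W)$, but they do contribute $\psi(W)$ to $G(\mathcal{W})=\sum_W\psi(W)$, so for any partition containing a null cell your claimed identity $G(\mathcal{W})=\sum_W\tau(W)\mu(W)$ fails unless $\psi$ vanishes on null sets. Structural Invariance only gives you that merging a null cell into another does not change $G$, i.e.\ $\psi(W\cup V)=\psi(W)+\psi(V)$ for null $V$; it does not force $\psi(V)=0$. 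The paper closes this hole with a dedicated argument (its Lemma on transparent sets): the null sets form an ideal on which $\psi$ is finitely additive, one extends $\psi$ restricted to that ideal to a finitely additive measure on all of $\W$, subtracts off a suitable $\{0,\mu(\U)\}$-valued measure supported on a maximal ideal containing the null sets (Boolean prime ideal theorem), and uses the Gilboa--Lehrer uniqueness result to conclude the adjusted $\psi$ is still a valid decomposition and now vanishes on null sets. Some argument of this kind is indispensable; without it the representation does not extend to experiments with zero-probability observations, which the model explicitly permits. A related, smaller omission: the representation must hold for arbitrary $(A,\P)\in\E$, whose identifiable sets $W_{A,P}$ only form a partition up to $\mu$-null overlaps, so you also need the equality clause of \ref{tau:subadd} to transfer $\tau$-values across $\mu$-equivalent sets when passing from partitions of $\U$ back to experiments.
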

	
	\section{Identifying Expected Utility Preferences} \label{sec:EUmodel}
	
	The structural invariance axiom, \r{ax:si}, states abstractly that the value of an experiment should not depend on structural details. When the choice environment has a specific structure, this principle can be made concrete so as to reflect the particular invariant quantities of the environment at hand. We will now show how structural invariance captures tangible restrictions on the ranking of experiments within the specific environment of linear utility. Here, the analyst is interested in identifying the Von Neumann–Morgenstern utility index of the subject, under the maintained assumption that he is an expected utility maximizer. 
	
	This environment is closely related to the setup of random expected utility models \`a la \cite{gul2006random}. In particular, the experimenter's prior $\mu$ defines a GP random expected utility model.  Our conditions on experiments \ref{exp:measure} and \ref{exp:nooverlap} and our richness condition are then direct consequences of the GP assumptions.
	
	
	
	Let $\Delta$ be a convex and compact subset of a finite dimensional Euclidean space $\R^\ell$ and $\U^\Delta$ denote the set of expected utility (i.e., affine) functions over $\Delta$. So the set of decision problems $\D$ is the set of all finite subsets of $\Delta$. Let $\Omega^\Delta$ be the smallest algebra on $\U^\Delta$ that contains all identifiable sets: that is contains $W_{A,B}$ for all $A \in \D$ and $B \subseteq A$. Following  GP call a $\mu \in \Pr(\U^\Delta, \W^\Delta)$ \emph{regular} if $\mu(u\in\U^\Delta|u(x)=u(y))=0$ for all $x,y\in \Delta$.\footnote{GP shows that regular measures are exactly the measures that can be potentially identified from (random) choice data.} 
	
	Theorem \ref{thm:rich}, below, shows that our richness assumption is not overly strong; within the expected utility model, it is a natural consequence of standard assumptions over the ex-ante distribution on utilities.
	
	\begin{theorem}
		\label{thm:rich}
		If $\mu$ is regular, then $\E^\Delta = \{(A,\P) \mid A \subseteq \Delta \text{ is finite}, \P \text{ partitions } A\}$ is a rich set of experiments.
	\end{theorem}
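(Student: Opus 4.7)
The plan is to verify that $\E^\Delta$ satisfies the two conditions defining an experiment (\ref{exp:measure}, \ref{exp:nooverlap}) for every member, together with the two defining properties of a rich set. I would proceed in three stages.

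First, I would check the experiment conditions for any $(A,\P)\in\E^\Delta$. Condition \ref{exp:measure} is immediate because $\Omega^\Delta$ is, by definition, the smallest algebra containing every set of the form $W_{A,B}$. Condition \ref{exp:nooverlap} is where regularity enters: for any distinct $P,Q\in\P$, any $u\in W_{A,P}\cap W_{A,Q}$ must attain its maximum on $A$ both at some point in $P$ and at some point in $Q$, hence must satisfy $u(x)=u(y)$ for some distinct $x\in P$, $y\in Q$. Since $A$ is finite and regularity forces each indifference set $\{u:u(x)=u(y)\}$ to be $\mu$-null, a finite union argument yields $\mu(W_{A,P}\cap W_{A,Q})=0$.

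Second, closure under coarsening is essentially by definition: if $(A,\P)\in\E^\Delta$ and $\Q$ coarsens $\P$, then $A$ is still a finite subset of $\Delta$ and $\Q$ still partitions it, and the experiment conditions for $\Q$ follow from those for $\P$ by expressing each $W_{A,Q}$ as a finite union of $W_{A,P}$'s.

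The main substantive step is verifying the second richness property. Given a finite $\Omega^\Delta$-measurable partition $\mathcal{W}=\{W_1,\dots,W_n\}$, each $W_i$ lies in the algebra generated by the family $\{W_{A,\{a\}}:A\in\D,\, a\in A\}$ (using $W_{A,B}=\bigcup_{b\in B}W_{A,\{b\}}$), so it is a finite Boolean combination of such singleton sets. Collecting the finitely many decision problems that appear across all such Boolean expressions, I obtain a list $A_1,\dots,A_m$. I would then form the Minkowski combination $A^{*}=\sum_{k=1}^{m}\lambda_k A_k\subseteq\Delta$ for positive weights $\lambda_k$ summing to one. Linearity of utility together with regularity imply that for $\mu$-almost every $u$, each $A_k$ has a unique maximizer $a_k^{*}(u)$, and the unique maximizer of $u$ on $A^{*}$ is $\sum_k\lambda_k a_k^{*}(u)$. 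Hence, up to a $\mu$-null set, the singleton identifiable partition $\{W_{A^{*},\{a^{*}\}}\}_{a^{*}\in A^{*}}$ coincides with the common refinement of the $m$ individual partitions $\{W_{A_k,\{a\}}\}_{a\in A_k}$. Because each $W_i$ is a union of atoms of this refinement, defining $P_i$ to be the set of $a^{*}\in A^{*}$ whose corresponding atom lies in $W_i$ yields a partition $\P=\{P_1,\dots,P_n\}$ of $A^{*}$ (empty cells discarded) with $\{W_{A^{*},P}\}_{P\in\P}$ $\mu$-equivalent to $\mathcal{W}$.

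The delicate step---and the main obstacle---is choosing $\lambda$ so that the decomposition $a^{*}=\sum_k\lambda_k a_k$ is unique for each $a^{*}\in A^{*}$; without this, distinct tuples in $\prod_k A_k$ could map to the same point and potentially lie in different $W_i$'s, spoiling the assignment. For each pair of distinct tuples, the weights forcing collision satisfy a nontrivial linear equation and so lie in a proper affine subspace of the open simplex. Since there are only finitely many such pairs, a generic $\lambda$ avoids all of them simultaneously, completing the construction.
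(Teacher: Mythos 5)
Your argument is correct, but it takes a genuinely different route from the paper's. The paper's proof leans on the structural results of Gul and Pesendorfer: each measurable $W_i$ is written as a finite union of relative interiors of normal cones of polytopes (their Propositions 6(ii) and 4), the decision problem is then built as the set $X^\star$ of face barycenters of a Minkowski sum of those polytopes, using the fact that the normal fan of a Minkowski sum refines each summand's fan, and their Lemma 2 disposes of the measure-zero boundaries. You instead exploit the definition of $\W^\Delta$ as a generated algebra: each $W_i$ is a finite Boolean combination of singleton identifiable sets $W_{A,\{a\}}$, so it suffices to realize (up to null sets) the common refinement of finitely many singleton partitions, which your Minkowski combination $A^*=\sum_k\lambda_k A_k$ does, since $W_{A^*,\{\sum_k\lambda_k a_k\}}=\bigcap_k W_{A_k,\{a_k\}}$ whenever all $\lambda_k>0$. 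Your route is more elementary and self-contained (no cone/fan machinery), at the cost of being tied to the algebraic description of $\W^\Delta$; the paper's construction yields finer geometric information about identifiable sets. Your explicit verification of \ref{exp:measure} and \ref{exp:nooverlap} via regularity is welcome (the paper omits it), and is exactly right. One small remark: the genericity step you flag as the main obstacle is actually unnecessary. If two distinct tuples $(a_k)$ and $(b_k)$ yield the same point $a^*$, then $u(a^*)=\max_{A^*}u$ is equivalent both to ``each $a_k$ maximizes $u$ on $A_k$'' and to ``each $b_k$ maximizes $u$ on $A_k$'' (because $\sum_k\lambda_k u(a_k)=\sum_k\lambda_k u(b_k)\leq\sum_k\lambda_k\max_{A_k}u$ with $\lambda_k>0$ forces termwise equality in either case), so the two candidate cells $\bigcap_k W_{A_k,\{a_k\}}$ and $\bigcap_k W_{A_k,\{b_k\}}$ coincide and the assignment of $a^*$ to a cell of $\mathcal{W}$ is unambiguous for any strictly positive $\lambda$; your generic choice of $\lambda$ is of course also valid, just not needed.
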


	We will now recast structural invariance in a domain specific manner, illuminating the concrete notion of invariance that is relevant to the expected utility model. Specifically, we will show that structural invariance is equivalent to two axioms that specify when two experiments are equivalent and that do not need to directly reference sets of identifiable utilities. To do this, we first need to define a notion of mixing: For $A,B \subseteq \Delta$, and $\alpha \in [0,1]$, let $\alpha A + (1-\alpha) B = \{\alpha x + (1-\alpha) y \mid x \in A, y \in B\}$ denote the the Minkowski sum. If $A$ and $B$ are decision problems (i.e., are finite), then so is  $\alpha A + (1-\alpha) B$.
	
	Observe that under the assumption that $u$ is linear, if $x$ maximizes $u$ over $A$ and $y$ maximizes $u$ over $B$, then the mixture of $x$ and $y$ will maximize $u$ over the corresponding mixture of the menus. That is:
	\begin{equation*}
		\begin{rcases*}
			x \in \argmax_{A} u(\cdot) \\
			y \in \argmax_{B} u(\cdot)
		\end{rcases*} 
		\quad \text{ if and only if } \quad
		\alpha x + (1-\alpha)y \in \argmax_{\alpha A + (1-\alpha)B} u(\cdot)
	\end{equation*}
	for any $\alpha>0$ and $A,B \in \D$. Now consider the experiment $(A,\{P_1, \ldots P_n\})$ and some other decision problem $B$; by the above logic $u \in W_{A,P_i}$ if and only $u \in W_{\alpha A + (1-\alpha)B, \alpha P_i + (1-\alpha)B}$. Indeed, there must be some $y \in B$ that maximizes $u$ over $B$, so that $\alpha x + (1-\alpha)y \in \alpha P_i + (1-\alpha)B$ maximizes $u$ over the mixture.  Hence, translating an experiment by mixing both the decision problem and the observability partition with some common $B$ does not alter which preference sets can be identified. This particular form of invariance is captured by the following axiom.\footnote{The reason there is a subset, rather than set equality, in the axiom is that it is possible that $z \in \alpha A + (1-\alpha)B$ is not a unique mixture of two elements. That is,  $z = \alpha x + (1-\alpha)y  = \alpha x' + (1-\alpha)y'$ for some $x,x' \in A$ and $y,y' \in B$. For these elements, the cell of the partition in which they reside is not determined, but, it turns out not to matter. See the appendix for the formal argument.}

	\begin{ax}{A}{Translation Invariance}{ti}
		For $A,B \in \D$, we have 
		$$ (A,\{P_1, \ldots P_n\}) \sim (\alpha A + (1-\alpha)B,\{Q_1, \ldots Q_n\}),$$
		whenever $Q_i \subseteq (\alpha P_i + (1-\alpha)B )$ for all $i \leq n$.
	\end{ax}

	Recall that \r{ax:si} also implies that the value of identification should not depend on zero probability perturbations. The following axiom reflects this implication.
	
	\begin{ax}{A}{Belief Consistency}{rcr}
		Fix $A \in \D$,  and let  $\{P_1, P_2, \ldots P_n\}$ be a partition of $A$ such that $\mu(W_{A,P_1}) = 0$. Then:
		$$(A,\{P_1, P_2, \ldots P_n\}) \sim (A,\{P_1 \cup P_2, \ldots P_n\}).$$
	\end{ax}
	
	
	Within the expected utility framework, translation invariance and belief consistency are equivalent to structural invariance. Thus, along with the other axioms from the general model, the two axioms above provide a characterization of expected identification value maximization with linear utility. 
	
	\begin{theorem}
		\label{thm:eu-rep}
		Let $\s$ be defined over $\Pi(\E^\Delta)$. Then $\s$ satisfies \r{ax:ti} and \r{ax:rcr} if and only if it satisfies \r{ax:si}. 
	\end{theorem}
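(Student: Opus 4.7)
The plan is to prove each direction of the equivalence, treating the backward direction as the substantive content.

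\textbf{Forward direction (\r{ax:si} implies \r{ax:ti} and \r{ax:rcr}).} For \r{ax:rcr}: if $\mu(W_{A, P_1}) = 0$, the collections $\{W_{A, P_i}\}_i$ and $\{W_{A, P_1 \cup P_2}, W_{A, P_3}, \ldots, W_{A, P_n}\}$ are $\mu$-equivalent, since $W_{A, P_1 \cup P_2} = W_{A, P_1} \cup W_{A, P_2}$ agrees with $W_{A, P_2}$ modulo the null set $W_{A, P_1}$. Then \r{ax:si} yields the indifference. For \r{ax:ti}, the key computation is to show $W_{\alpha A + (1-\alpha) B, Q_i} \sim W_{A, P_i}$ modulo $\mu$-null sets, for each $i$. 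Linearity gives $z \in \argmax_{\alpha A + (1-\alpha) B} u$ if and only if $z = \alpha x + (1-\alpha) y$ with $x \in \argmax_A u$ and $y \in \argmax_B u$. A point $z$ may admit multiple such decompositions, but by regularity of $\mu$ in the expected-utility setting, the set of $u$'s with such ties is $\mu$-null. Thus the identifiable sets coincide up to null sets, and \r{ax:si} delivers the conclusion.

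\textbf{Backward direction (\r{ax:ti} and \r{ax:rcr} imply \r{ax:si}).} Suppose $(A, \P)$ and $(B, \Q)$ have $\mu$-equivalent identifiable sets. By iterated \r{ax:rcr}, absorb each $\mu$-null cell into a non-null neighbor, so without loss of generality both experiments have only non-null cells, $n = m$, and after reindexing $W_{A, P_i} \sim W_{B, Q_i}$ modulo null sets. Set $C = \tfrac{1}{2} A + \tfrac{1}{2} B$. By \r{ax:ti}, choose partitions $\tilde \P = \{\tilde P_i\}$ and $\tilde \Q = \{\tilde Q_j\}$ of $C$ with $\tilde P_i \subseteq \tfrac{1}{2} P_i + \tfrac{1}{2} B$ and $\tilde Q_j \subseteq \tfrac{1}{2} A + \tfrac{1}{2} Q_j$, so that $(A, \P) \sim (C, \tilde \P)$ and $(B, \Q) \sim (C, \tilde \Q)$.

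Now form the common refinement $\tilde \P \vee \tilde \Q$, whose cells are $\tilde P_i \cap \tilde Q_j$. By the same linearity-plus-regularity argument, $W_{C, \tilde P_i} \sim W_{A, P_i}$ and $W_{C, \tilde Q_j} \sim W_{B, Q_j}$ modulo null sets, so $W_{C, \tilde P_i \cap \tilde Q_j}$ is $\mu$-null whenever $i \neq j$ (distinct identifiable cells intersect in a null set by property \ref{exp:nooverlap}). Apply \r{ax:rcr} iteratively to merge each off-diagonal null cell into the diagonal cell of its row, transforming $(C, \tilde \P \vee \tilde \Q)$ into $(C, \tilde \P)$; merging column-wise instead transforms it into $(C, \tilde \Q)$. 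Hence $(C, \tilde \P) \sim (C, \tilde \P \vee \tilde \Q) \sim (C, \tilde \Q)$, and chaining with the \r{ax:ti} equivalences gives $(A, \P) \sim (B, \Q)$.

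The main obstacle is the careful bookkeeping of $\mu$-null boundary phenomena. The partitions $\tilde \P, \tilde \Q$ selected via \r{ax:ti} are not uniquely determined because a point in $C$ may admit multiple decompositions $\tfrac{1}{2} a + \tfrac{1}{2} b$; regularity of $\mu$ is what guarantees this ambiguity is confined to a null set of utilities. The same regularity argument underpins the crucial identity $W_{C, \tilde P_i \cap \tilde Q_j} \sim W_{A, P_i} \cap W_{B, Q_j}$, and verifying at each stage that the merged cell is genuinely $\mu$-null requires tracking how $\mu$-equivalence propagates through the mixing operation and through the formation of the common refinement.
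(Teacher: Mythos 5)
Your proof is correct and follows essentially the same route as the paper's: for the substantive direction you mix the two menus into $C=\tfrac12 A+\tfrac12 B$, use Translation Invariance to transfer both experiments onto $C$, observe that the ``off-diagonal'' identifiable sets are $\mu$-null (by $\mu$-equivalence together with condition \ref{exp:nooverlap}), and then use Belief Consistency to merge the null cells. The only cosmetic difference is that you pass through the common refinement $\{\tilde P_i\cap\tilde Q_j\}$ and collapse it in two ways, whereas the paper directly builds the common partition $\{R_i'\}$ out of extreme points of $C$ (where decompositions are unique); you also spell out the easy converse direction, which the paper relegates to its main-text discussion.
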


	\section{Shannon Entropy}
	\label{sec:entropy}
	Aside from applying to broader settings, out analysis can be used as a building block to provide foundations for specific theories of Bayesian experimental design. In this section we show how our Structural Invariance and Identification Separability can be strengthened to characterize the case in which the identification index $\tau$ conforms to the Shannon entropy:
	\begin{align*}
		\tau(W)=-\log(\mu(W)).
	\end{align*}
	Notice that within this special case, the value of identifying a subset of utilities depends only on it's ex-ante probability. 
	
	First, Structural Invariance can be strengthened to a Symmetry axiom stating that experiments inducing more evenly distributed probabilities across observations are preferred. When the analyst's value for identification depends only on its prior probability, then experiments in which the probability of each observation is approximately equal ensure that the ex-post identification value is approximately equal as well. Thus, for a cautious analyst, such experiments are desirable as they increase the worst case identification.
	
	\begin{ax}{A}{Symmetry}{sym}
		Fix $A,B \in \D$,  and partitions $\{P_1, \ldots P_n\}$ and $\{Q_1, \ldots Q_n\}$ of $A$ and $B$, respectively. Then if $|\mu(W_{B,Q_i}) - \tfrac{1}{n}| \geq |\mu(W_{A,P_i}) - \tfrac{1}{n}|$ for $i \leq n$, it follows that
		$$(A,\{P_1 \ldots P_n\}) \s (B,\{Q_1\ldots Q_n\}).$$
	\end{ax}
	
	Notice that if two experiments induce $\mu$-equivalent identification sets then they also induce the same distribution over the set of positive probability observations. Under \r{ax:rcr}, we can ignore $\mu$-probability zero observations, and so symmetry ensures that the experiments are equally valued. In other words, Symmetry (along with Belief Consistency) imply Structural Invariance.
	
	Next, we can strengthen Identification Separability to get not only additivity but the specific logarithmic form of the entropic representation.
	Entropic Additivity, below, disciplines how much the analyst values replacing $P_1$ in $(A,\P)$ with one of it partitions. Let $\P = \{P_1, \ldots P_n\}$ be a partition $A$ and $\P_1 = \{P^1_1, \ldots P^1_k\}$ a partition $P_1$. Then $\P^\dag = \{P^1_1, \ldots P^1_k, P_2, \ldots P_n\}$ is also partition of $A$. Fix an experiment $(B,\Q)$ such that 
	$\mu(W_{B,Q_i}) = \mu(W_{A,P^1_i} \mid W_{A,P_1})$ for $i=1,...,k$.
	
	The fundamental character of the entropic representation is that the value of an experiment only depends on the ratio between the prior and each induced posterior: as such learning which element of $\Q$ was chosen would impart the same value to the analyst as learning which element of $\P_1$ was chosen \emph{conditional on already knowing that $P_1$ was chosen from $\P$}. Further, the partition $\P^\dag$ is exactly like learning $\P$ and in the event $P_1 \in \P$ is chosen further learning which element of $\P_1$ is chosen. The extra learning happens with probability $\mu(W_{A,P_1})$: so the value $(A,\P^\dag)$ should equal the value of $(A,\P)$ plus $\mu(W_{A,P_1})$ times the value of learning the element chosen from $\P_1$, which as argued above is the value of $(B,\Q)$. Translating this into lotteries, we have:

	
	
		
	
	\begin{ax}{A}{entropic additivity}{ea}
		Fix $A\in \D$ let $\P = \{P_1, \ldots P_n\}$ partition $A$ and let $ \{P^1_1, \ldots P^1_k\}$ partition $P_1$. So $\P^\dag = \{P^1_1, \ldots P^1_k, P_2, \ldots P_n\}$ is also partition of $A$. Set $\alpha = \tfrac{1}{1 + \mu(W_{A,P_1})}$.
		Then if $B \in \D$ is such that $\Q = \{Q_1, \ldots Q_k\}$ is a partition of $B$ with $\mu(W_{B,Q_i}) = \mu(W_{A,P^1_i} \mid W_{A,P_1})$, it follows that
		
		$$
		\alpha (A, \P^\dag) + (1-\alpha)(A, \{A\}) \sim \alpha(A, \P) + (1-\alpha)(B, \Q)
		$$
	\end{ax}
	
	By replacing Structural Invariance and Identification Separability with the stronger entropic variants above, we find a characterization of expected entropy minimization.
	
	\begin{theorem}
		\label{thm:ent_rep}
		Let $\mu$ be non-atomic.
		The preference $\s$ satisfies \r{ax:eu}--\r{ax:mon} and \r{ax:rcr}--\r{ax:ea} if and only if it is represented by
		\begin{align*}
			F(\pi)=-\sum_{\supp(\pi)}\Big(\sum_{P \in \P} \log(\mu(W_{A,P})) \mu(W_{A,P})\Big)\pi(A,\P).
		\end{align*}
	\end{theorem}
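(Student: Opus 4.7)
The plan is to leverage Theorem \ref{thm:rep} to reduce the question to one about the identification index $\tau$, and then use Entropic Additivity to derive the functional equation characterizing Shannon entropy. First, I would show that \r{ax:sym} and \r{ax:rcr} jointly imply \r{ax:si}. Given $(A,\P)$ and $(B,\Q)$ whose identifiable sets are $\mu$-equivalent, iteratively apply \r{ax:rcr} to merge any cells associated with $\mu$-null sets in both experiments. What remains are partitions whose positive-measure identifiable sets match in measure after a relabeling. Applying \r{ax:sym} in both directions (the inequality $|\mu(W_{B,Q_i}) - 1/n| \geq |\mu(W_{A,P_i}) - 1/n|$ holds with equality when measures match) yields $(A,\P) \sim (B,\Q)$. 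Thus all axioms of Theorem \ref{thm:rep} hold, so $\s$ admits an expected identification value representation with some $\tau$ satisfying \ref{tau:subadd} and \ref{tau:empty}.

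Next, I would pass from $\tau$ to a function on probability simplices. Define $H_n(p_1,\ldots,p_n) := F(A,\P)$ for any experiment $(A,\P)$ with $\mu(W_{A,P_i})=p_i$; this is well-defined and symmetric by Structural Invariance and richness, and non-atomicity of $\mu$ guarantees that $H_n$ is defined on the entire simplex. The defining relation $F(A,\{A\}) = \tau(\U)\mu(\U) = 0$ gives $H_1(1) = 0$. Now translate \r{ax:ea} into this language: using \r{ax:eu} and $F(A,\{A\}) = 0$, the axiom's indifference becomes
\begin{equation*}
F(A,\P^\dag) - F(A,\P) = \mu(W_{A,P_1})\, F(B,\Q),
\end{equation*}
which in $H$-notation reads
\begin{equation*}
H_{n+k-1}(q_1,\ldots,q_k,p_2,\ldots,p_n) = H_n(p_1,\ldots,p_n) + p_1\, H_k\!\left(\tfrac{q_1}{p_1},\ldots,\tfrac{q_k}{p_1}\right).
\end{equation*}
This is precisely the Shannon grouping (recursion) property.

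Finally, invoke a Faddeev-type characterization: a sequence of symmetric functions on the simplices satisfying the grouping property above, vanishing at $H_1(1)=0$, and enjoying a mild regularity condition must be a positive scalar multiple of Shannon entropy, $H_n(p_1,\ldots,p_n) = -c \sum_i p_i \log p_i$. The scalar $c$ is pinned down to be nonnegative by \r{ax:mon} (since finer partitions have larger Shannon entropy) and the normalization $c=1$ reflects the positive-scaling freedom of any \r{ax:eu} representation with $F(A,\{A\})=0$ fixed. Substituting back gives $\tau(W) = -\log \mu(W)$ and the claimed functional form for $F$. The main technical obstacle will be verifying the regularity condition needed to invoke Faddeev's theorem (continuity or measurability of $H_2$ at some interior point). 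Non-atomicity of $\mu$ combined with richness ensures the functional equation holds densely, and the continuity of mixtures implied by \r{ax:eu}, together with the monotonicity imposed by \r{ax:mon}, should suffice to rule out the pathological (non-measurable) solutions of the Cauchy-type equation underlying Faddeev's result.
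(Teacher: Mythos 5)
Your overall strategy---reduce the problem to a functional equation for a symmetric function $H_n$ on probability simplices via well-definedness from Symmetry, derive the Shannon grouping property from \r{ax:ea}, and then cite a uniqueness theorem for entropy---is the same as the paper's, and your algebraic translation of \r{ax:ea} into the grouping recursion is exactly right. Two differences are worth flagging. First, your detour through Theorem \ref{thm:rep} (via showing \r{ax:sym} and \r{ax:rcr} imply \r{ax:si}) is sound but unnecessary: the paper works directly with the vNM index and never invokes the general representation, since once $H_n$ is well-defined the index $\tau$ plays no further role. Second, and more importantly, the two proofs diverge at the final step, and this is where your proposal has a genuine gap.

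The paper extracts from \r{ax:sym} the substantive property that $\eta(p_1,\ldots,p_n)\le\eta(\tfrac1n,\ldots,\tfrac1n)$ (its property \texttt{K2}) and then invokes Khinchin's characterization, in which maximality at the uniform distribution is the ingredient that replaces pointwise regularity of the information function. You instead use Symmetry only for well-definedness and lean on a Faddeev-type theorem, correctly noting that some regularity condition is then required to exclude the pathological solutions coming from non-measurable additive functions. Your proposed sources for that regularity do not work as stated: \r{ax:eu} gives linearity of $F$ in the mixing weights of lotteries over $\E$, not continuity of $\vnm$ (equivalently of $H_n$) as the underlying experiment or its probability vector varies---there is no topology on $\E$ in play---so ``continuity of mixtures'' buys you nothing here. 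The salvageable route is the one you mention only in passing: \r{ax:mon} together with grouping and the normalization $H_1(1)=0$ yields $H_n\ge 0$ everywhere, and nonnegativity of the information function is known to suffice (in place of continuity or measurability) to pin down Shannon entropy; but that requires citing a different, stronger result than Faddeev's original theorem, and you have not done so. As written, the final step of your argument is not closed; the most direct repair is to derive the maximality property from \r{ax:sym}---which you currently under-use---and follow the Khinchin route as the paper does.
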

	
	While the Shannon specification has significant normative appeal,
	Symmetry does impose restrictions on the analyst's risk attitudes that need to be spelled out. To illustrate, consider two experiments $(A,\{P_1,P_2\})$ and $(B, \{Q_1,Q_2\})$. Suppose
	\begin{align*}
		\mu(W_{A,P_1})&=\mu(W_{A,P_2})=\frac{1}{2}
		\\
		\mu(W_{B,Q_1})&=\frac{3}{4}, \mu(W_{B,Q_2})=\frac{1}{4}.
	\end{align*}
	Thus, if the analyst offers $(A,\{P_1,P_2\})$ she will be able to rule out \say{half} of the preference for the subject regardless of the subject's true utility. However, if she offers $(B,\{Q_1,Q_2\})$, then the size of the mass of preference she will be able to eliminate depends on the subjects preference. If the subject's preference is maximized in $Q_1$, she will be able to eliminate three quarters; if it is maximized in $Q_2$, she will only eliminate one quarter. 
	The entropic model imposes that the former is preferred, implicitly requiring a specific risk preference on the part of the analyst. We view this as beyond the scope of what can be argued only on normative grounds.
	
	\section{Belief Free Models}
	\label{sec:belief-free}
	
	In many cases, the analyst may not entertain a prior over $\U$. Nonetheless, our theory applies almost exactly. In this case, the value function $F$ can be written as 
	\begin{equation}
		\label{eq:belief_free}
		F(A,\P)=\sum_{P \in \P} \nu(W_{A,P}),
	\end{equation}
	for an abstract subadditive identification index $\nu$ which does not separate the intrinsic value of identification from its likelihood. This is akin to the failure of separation into tastes and beliefs in state-dependent expected utility. 
	
	In the original model, the value of identification was invariant to $\mu$-measure zero perturbations. This is what allowed us to work with $\mu$-equivalent-approximations of partitions of $\U$, greatly extending the set of scope of application. Without beliefs, we must re-cast the notation of null sets in a more general from. Call $V \in \W$ \emph{transparent} if for any $(A, \{P_1, P_2, \ldots P_n\})$ with $W_{A,P_1} = V$, we have 
	\begin{equation}
		\label{eq:transparent_text}
		(A, \{P_1, P_2, \ldots P_n\}) \sim (A, \{P_1 \cup P_2, \ldots P_n\}).
	\end{equation}
	
	Using transparency as a preference based definition of nullness, we can restate the condition \ref{exp:nooverlap} and richness without having to appeal to beliefs. In particular, assume that for a given set of experiments $\E$
	
	\begin{enumerate}[label=(\texttt{E\arabic*'}), itemsep=-.4em]
		\item\label{exp:nooverlapT} For all $(A,\P) \in \E$ and $P,Q \in \P$, if $V \subseteq (W_{A,P} \cap W_{A,Q})$, then $V$ is transparent.
	\end{enumerate}
	
	Further, call two (finite) collections of subsets of $\U$, $\{W_1, \ldots W_n\}$ and $\{V_1, \ldots W_m\}$ \emph{$T$-equivalent} if for every non-transparent $W_i$ there exists a $V_j$ such that $W_i \setminus V_j$ and $V_j \setminus W_i$ are both transparent, and vice versa. That is, the collections are $T$-equivalent if they can be identified up to transparent sets.
	
	Modulo these two changes, Theorem \ref{thm:rep} goes through exactly as stated to arrive at a representation of the form \eqref{eq:belief_free}. To see this, notice that the set of all $\{V \mid V \subseteq W_{A,P} \cap W_{A,Q}, \text{ for some } (A,\P) \in \E, P,Q \in \P \}$ is a down-set. The ideal generated by this down-set is a subset of all transparent sets (it is immediate from their definition that transparent sets are closed under finite unions). Thus, there exists a $\{0,1\}$-valued finitely additive measure on $\W$ sending all such sets to 0. We can then define $\mu$ as this measure and set $\nu = \mu \cdot \tau$.
	\section{Observability Constraints}
	\label{sec:partial-observability}
	We conclude the paper by illustrating how our framework is general enough to capture partial observability in dynamic environments. We begin by considering the case in which an analyst employs an adaptive method and the subject is an expected utility maximizer. 
	
	Suppose an analyst first offers $\{x_{0},y_{0}\}$. If the subject chooses $x_{0}$, then she offers $\{x_{x},y_{x}\}$, otherwise she offers $\{x_{y},y_{y}\}$. Given the nature of the procedure, if the agent chooses $x$ ($y$) from $\{x,y\}$, then the analyst will know the choice out of $\{x_{x},y_{x}\}$ but not the choice out of $\{x_{y},y_{y}\}$ ($\{x_{x},y_{x}\}$). Figure \ref{fig:adaptive} illustrates the procedure.
	
	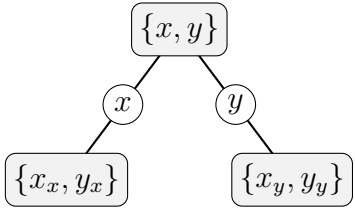
\begin{figure}
		\centering
		
		\begin{tikzpicture}
			
			\def\xshift{1.5cm} 
			\def\yshift{2cm} 
			
			\tikzstyle{node_shaded} = [draw, fill=gray!10, rectangle, rounded corners, minimum size=20pt, inner sep=3pt]
			\tikzstyle{circle_node} = [draw, circle, fill=white, minimum size=15pt, inner sep=1pt]
			
			\coordinate (center) at (0,0);
			
			\draw[thick] (center) -- ++(-\xshift, -\yshift); 
			\draw[thick] (center) -- ++(\xshift, -\yshift);  
			
			\node[node_shaded] at (center) {$\{x,y\}$};
			\node[node_shaded] at ($(center) + (-\xshift, -\yshift)$) {$\{x_{x},y_{x}\}$};
			\node[node_shaded] at ($(center) + (\xshift, -\yshift)$) {$\{x_{y},y_{y}\}$};
			
			\node[circle_node] at ($(center) + (-0.5*\xshift, -0.5*\yshift)$) {$x$};
			\node[circle_node] at ($(center) + (0.5*\xshift, -0.5*\yshift)$) {$y$};
			
		\end{tikzpicture}
		
		\caption{Adaptive Procedure}
		\label{fig:adaptive}
	\end{figure}
	
	Observe that  because of linearity of the preferences, observing a choice from $\{x,y\}$ and $\{x_{x},y_{x}\}$ is the same as observing a choice from 
	\begin{align*}
		A_{x}=\{\frac{1}{2}x+\frac{1}{2}x_{x},\frac{1}{2}x+\frac{1}{2}y_{x},\frac{1}{2}y+\frac{1}{2}x_{x},\frac{1}{2}y+\frac{1}{2}y_{x}\}.
	\end{align*}
	The reason is that any expected utility maximize that would choose $a$ out of $\{x,y\}$ and $b$ out of $\{x_{x},y_{x}\}$ would choose $\frac{1}{2}a+\frac{1}{2}b$ out of $A_{x}$.
	
	Similarly, observing a choice from $\{x,y\}$ and $\{x_{y},y_{y}\}$ is the same as observing a choice from 
	\begin{align*}
		A_{y}=\{\frac{1}{2}x+\frac{1}{2}x_{x},\frac{1}{2}x+\frac{1}{2}y_{y},\frac{1}{2}y+\frac{1}{2}x_{y},\frac{1}{2}y+\frac{1}{2}y_{y}\}.
	\end{align*}
	Consider the menu $A_{x}\cup A_{y}$ and the partition
	\begin{align*}
		\P=\Big\{\{\frac{1}{2}x+\frac{1}{2}x_{x},\frac{1}{2}x+\frac{1}{2}y_{x}\},\{\frac{1}{2}y+\frac{1}{2}x_{y},\frac{1}{2}y+\frac{1}{2}y_{y}\}\Big\}.
	\end{align*}
	Then the information provided by the adaptive design is equivalent to the information provided by $(A_{x}\cup A_{y},\P)$. 
	
	The above example can be easily generalized to adaptive procedures that employ $T$ rounds of choices that allow for non-binary menus. While the intuition is clear, precisely stating an equivalence result requires a significant amount of notation and so we leave this at the informal level.
	
	Next, consider an analyst interested in learning the subject's preference by employing a dynamic game. Suppose the game features two players, the subject and a computer. 
	The subject can first choose \emph{out} $(o)$ or \emph{in} $(i)$. If the subject chooses in, then the computer randomizing between \emph{right} $(r)$ and \emph{left} $(l)$. Following left the subject has a choice between $a$ and $b$ and following right, a choice between $c$ and $d$.
	The game is provided in Figure \ref{fig:game}.
	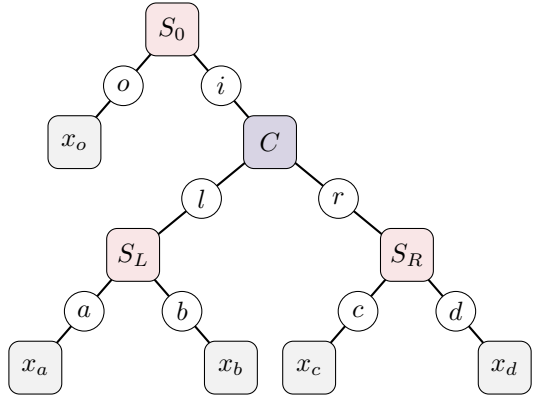
\begin{figure}
		\centering
		\begin{tikzpicture}[every node/.style={font=\footnotesize}]]
			\def\xshift{1.3cm} 
			\def\yshift{1.5cm} 
			
			\tikzstyle{node_shaded} = [draw, fill=gray!10, rectangle, rounded corners, minimum size=20pt, inner sep=3pt]
			\tikzstyle{circle_node} = [draw, circle, fill=white, minimum size=15pt, inner sep=1pt]
			
			\coordinate (center) at (0,0);
			\coordinate (c) at ($(center) + (\xshift, -\yshift)$);
			\coordinate (l) at ($(c) + (-1.4*\xshift, -\yshift)$);
			\coordinate (r) at ($(c) + (1.4*\xshift, -\yshift)$);
			
			\draw[thick] (center) -- ++(-\xshift, -\yshift); 
			\draw[thick] (center) -- ++(\xshift, -\yshift);  
			\draw[thick] (c) -- (l);  
			\draw[thick] (c) -- (r);  
			
			\draw[thick] (l) -- ++( - \xshift, -\yshift);  
			\draw[thick] (l) -- ++(\xshift, -\yshift);  
			\draw[thick] (r) -- ++( - \xshift, -\yshift);  
			\draw[thick] (r) -- ++(\xshift, -\yshift);  
			
			\node[node_shaded, fill=lam2!10] at (center) {$S_{0}$};
			\node[node_shaded] at ($(center) + (-\xshift, -\yshift)$) {$x_o$};
			\node[node_shaded, fill=lam3!20] at (c) {$C$};
			
			\node[node_shaded,fill=lam2!10] at (l) {$S_L$};
			\node[node_shaded,fill=lam2!10] at (r) {$S_R$};
			
			\node[node_shaded] at ($(l) + (-\xshift, -\yshift)$) {$x_{a}$};
			\node[node_shaded] at ($(l) + (\xshift, -\yshift)$) {$x_{b}$};
			\node[node_shaded] at ($(r) + (-\xshift, -\yshift)$) {$x_{c}$};
			\node[node_shaded] at ($(r) + (\xshift, -\yshift)$) {$x_{d}$};

			\node[circle_node] at ($(center) + (-0.5*\xshift, -0.5*\yshift)$) {$o$};
			\node[circle_node] at ($(center) + (0.5*\xshift, -0.5*\yshift)$) {$i$};
			
			\node[circle_node] at ($(c) + (-0.5*1.4*\xshift, -0.5*\yshift)$) {$l$};
			\node[circle_node] at ($(c) + (0.5*1.4*\xshift, -0.5*\yshift)$) {$r$};
			
			\node[circle_node] at ($(l) + (-0.5*\xshift, -0.5*\yshift)$) {$a$};
			\node[circle_node] at ($(l) + (0.5*\xshift, -0.5*\yshift)$) {$b$};
			\node[circle_node] at ($(r) + (-0.5*\xshift, -0.5*\yshift)$) {$c$};
			\node[circle_node] at ($(r) + (0.5*\xshift, -0.5*\yshift)$) {$d$};
			
		\end{tikzpicture}
		\caption{The dynamic game between the subject and computer. The subject's decision nodes are shaded \textcolor{lam2!50}{red} and the computer's \textcolor{lam3!70}{blue}.}
		\label{fig:game}
	\end{figure}
	
	Suppose that the analyst only observes the on-path strategies; she cannot know how the subject would behave in a sub-game that is not reached. There are five actions the analyst could potentially observe: $(o)$, $(i,a)$, $(i,b)$, $(i,c)$, and $(i,d)$. Notice which of these is observed depends not only on the subject's choice but also the outcome of the computer randomization. These observations can be adapted into our framework. Let $A$ be the set of all strategies for the subject:
	\begin{align*}
		A=
		\left\{\begin{array}{l}
			(i,a,c),(i,b,c),(i,a,d),(i,b,d), \\
			(o,a,c),(o,a,d),(o,b,c),(o,b,d)
		\end{array}\right\}
	\end{align*}
	Now consider the following partitions of $A$
	\begin{align*}
		\P_L=\left\{\begin{array}{c}
			\left\{\begin{array}{c}
				(i,a,c),(i,b,c), \\
				(i,a,d),(i,b,d) 
			\end{array}\right\},\\[3ex]
			\left\{\begin{array}{c}
				(o,a,c),(o,a,d)
			\end{array}\right\},\\[2ex]
			\left\{\begin{array}{c}
				(o,b,c),(o,b,d)
			\end{array}\right\}\phantom{,}
		\end{array}\right\}
		&&
		\P_R=\left\{\begin{array}{c}
			\left\{\begin{array}{c}
				(i,a,c),(i,b,c), \\
				(i,a,d),(i,b,d) 
			\end{array}\right\},\\[3ex]
			\left\{\begin{array}{c}
				(o,a,c),(o,b,c)
			\end{array}\right\},\\[2ex]
			\left\{\begin{array}{c}
				(o,a,d),(o,b,d)
			\end{array}\right\}\phantom{,}
		\end{array}\right\}
	\end{align*}
	
	Observe that a single choice of $(A,\P_L)$ would yield the same information as observing the subject play the dynamic game in the event that the computer chooses \emph{left}, and likewise $(A,\P_R)$ should the computer choose \emph{right}. Thus, if the computer chooses \emph{left} with probability $\alpha$, then an observation in the dynamic game is observationally equivalent to an observation from the randomized experiment $\alpha(A,\P_L) + (1-\alpha)(A,\P_R)$. Again, this can be generalized: observations of on-path behavior in dynamic games can be incorporated into our framework via the appropriately constructed random experiments. 
	
	\appendix 
	\section{Proofs}
	\subsection{Proof of Theorem \ref{thm:rep}}
	
	Let $\part(\U)$ denote the finite $\W$-measurable partitions of $\U$. That is, $\{W_1, \ldots W_n\} \in \part(\U)$ if it is a (finite) partition of $\U$ such that each $W_i \in \W$.
	
	\begin{lemma}
		\label{lem:allmueqiv}
		Let $(A, \{P_1, \ldots P_n\}) \in \E$; then $\{W_{A, P_i}\}_{i \leq n}$ is $\mu$-equivalent to some partition $\WW \in \part(\U)$. 
	\end{lemma}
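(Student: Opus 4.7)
The plan is to construct the required partition by the standard disjointification of the collection $\{W_{A,P_i}\}_{i \leq n}$, and then verify that this construction only moves mass across cells of $\mu$-measure zero.

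First, I would enumerate the cells of the partition as $P_1, \ldots, P_n$ and record what the experiment definition already gives us: by \ref{exp:measure}, each $W_{A,P_i}$ lies in the algebra $\W$, and by \ref{exp:nooverlap}, all pairwise intersections satisfy $\mu(W_{A,P_i} \cap W_{A,P_j}) = 0$ for $i \neq j$. Moreover, since $A$ is finite, every $u \in \U$ has a maximizer in $A$; this maximizer sits in some (unique) cell $P_j$, so $u \in W_{A,P_j}$. Thus $\bigcup_{i} W_{A,P_i} = \U$, that is, the family $\{W_{A,P_i}\}$ is an ``almost partition'' (it covers $\U$ and has null pairwise intersections).

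Next, I would set $V_i = W_{A,P_i} \setminus \bigcup_{j<i} W_{A,P_j}$ for each $i \leq n$. Since $\W$ is an algebra, $V_i \in \W$. By construction the $V_i$ are pairwise disjoint, and $\bigcup_i V_i = \bigcup_i W_{A,P_i} = \U$, so $\WW := \{V_1, \ldots, V_n\} \in \part(\U)$ (dropping any empty $V_i$ if one insists partitions have non-empty cells; this does not affect $\mu$-equivalence).

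It remains to check $\mu$-equivalence with $\{W_{A,P_i}\}_{i \leq n}$. Fix any $i$. Then $V_i \subseteq W_{A,P_i}$, so $V_i \cap W_{A,P_i} = V_i$. Finite subadditivity together with \ref{exp:nooverlap} yields
\begin{equation*}
\mu(W_{A,P_i} \setminus V_i) \;=\; \mu\Big(W_{A,P_i} \cap \bigcup_{j<i} W_{A,P_j}\Big) \;\leq\; \sum_{j<i} \mu(W_{A,P_i} \cap W_{A,P_j}) \;=\; 0,
\end{equation*}
so $\mu(V_i) = \mu(W_{A,P_i}) = \mu(V_i \cap W_{A,P_i})$. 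Hence whenever $\mu(W_{A,P_i}) > 0$ the cell $V_i \in \WW$ witnesses the $\mu$-equivalence condition, and symmetrically whenever $\mu(V_i) > 0$ the cell $W_{A,P_i}$ witnesses it.

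There is no real obstacle here; the only points requiring care are the coverage argument (which uses finiteness of $A$ to guarantee the existence of maximizers) and the application of \ref{exp:nooverlap} through finite subadditivity to absorb overlaps into $\mu$-null sets. Everything else is bookkeeping.
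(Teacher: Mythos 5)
Your proof is correct and follows essentially the same route as the paper's: both disjointify the family $\{W_{A,P_i}\}$ by subtracting previously (or subsequently) indexed sets, use finiteness of $A$ to get coverage of $\U$, and invoke the null pairwise intersections from the definition of an experiment to conclude that each disjointified cell has the same measure as the original set. The only difference is cosmetic (you subtract earlier-indexed sets where the paper subtracts later-indexed ones), and your explicit check of both directions of $\mu$-equivalence is a welcome bit of extra care.
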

	
	\begin{subproof}
		By \ref{exp:measure} $\{W_{A, P_i}\}_{i \leq n}$ and since each $u \in \U$ find some maximum on $A$, we have $\bigcup_{i\leq n} W_{A, P_i} = \U$. Define $W_i = W_{A,P_i} \setminus \bigcup_{i < j} W_{A, P_j}$. Clearly, $\{W_i\}_{i \leq n} \in \part(\U)$. Moreover, 
		$\mu(W_{A,P_i}) = \mu(W_{A,P_i}) - \sum_{i < j} \mu(W_{A, P_i} \cap W_{A, P_j}) \leq \mu(W_{A,P_i} \setminus \bigcup_{i < j} W_{A, P_j}) =   \mu(W_i) = \mu(W_i \cap W_{A,P_i}) \leq \mu(W_{A,P_i})$, establishing $\mu$-equivalence (the first, and only non-set-theoretically obvious, equality comes form \ref{exp:nooverlap}).
	\end{subproof}
	
	\begin{lemma}
		\label{lem:bijection}
		Let $\bm{W} \subseteq \W$ be $\mu$-equivalent to $\bm{V} \subseteq \W$, and assume $\mu(W \cap W') = 0$ and $\mu(V_i \cap V_j) = 0$ for any distinct $W, W' \in \bm{W}$ and $V, V' \in \bm{V}$. Then there exists a bijection, $q$, between $\{W \in \bm{W} \mid \mu(W) > 0\}$ and $\{V \in \bm{V} \mid \mu(V) > 0\}$ such that $\mu(W) = \mu(W \cap h(W)) = \mu(q(W))$.
	\end{lemma}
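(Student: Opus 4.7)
The plan is to construct $q$ pointwise via the $\mu$-equivalence hypothesis and then verify it is well-defined, injective, and surjective; the one step that requires thought is uniqueness, and everything else follows by symmetry. For each $W \in \bm{W}$ with $\mu(W) > 0$, the $\mu$-equivalence hypothesis supplies at least one $V \in \bm{V}$ satisfying $\mu(W) = \mu(W \cap V) = \mu(V)$, and I would declare $q(W)$ to be this $V$ once uniqueness is established.

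The central step is showing that this $V$ is uniquely determined by $W$. Suppose $V_{1}, V_{2} \in \bm{V}$ both satisfy the defining equation for the same $W$. Then $W \setminus V_{i}$ is $\mu$-null for $i = 1,2$, so $W$ is $\mu$-equivalent to $W \cap V_{i}$, and combining the two gives $\mu(W \cap V_{1} \cap V_{2}) = \mu(W) > 0$. However, $W \cap V_{1} \cap V_{2} \subseteq V_{1} \cap V_{2}$, and the standing hypothesis $\mu(V_{1} \cap V_{2}) = 0$ whenever $V_{1} \neq V_{2}$ forces $V_{1} = V_{2}$. Hence $q$ is well-defined on $\{W \in \bm{W} \mid \mu(W) > 0\}$.

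Injectivity of $q$ then follows by the symmetric argument on the other side: if $q(W_{1}) = q(W_{2}) = V$ with $W_{1} \neq W_{2}$, the same computation applied to $V$ in place of $W$ yields $\mu(V \cap W_{1} \cap W_{2}) = \mu(V) > 0$, contradicting $\mu(W_{1} \cap W_{2}) = 0$. For surjectivity, fix $V \in \bm{V}$ with $\mu(V) > 0$; the \say{vice versa} clause in the definition of $\mu$-equivalence produces a $W \in \bm{W}$ with $\mu(V) = \mu(V \cap W) = \mu(W) > 0$. This $W$ in particular satisfies the relation defining $q$, and by the uniqueness just established, $q(W) = V$. This delivers the desired bijection and simultaneously the required measure identity $\mu(W) = \mu(W \cap q(W)) = \mu(q(W))$ (I read the $h(W)$ appearing in the statement as a typo for $q(W)$). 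I do not anticipate any real obstacle beyond the uniqueness computation, which is short and rests entirely on the pairwise $\mu$-null intersection hypothesis on $\bm{W}$ and $\bm{V}$.
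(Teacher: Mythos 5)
Your proposal is correct and follows essentially the same route as the paper: both hinge on showing that the $V$ paired with a positive-measure $W$ is unique because two candidates $V_1, V_2$ would force $\mu(V_1 \cap V_2) \geq \mu(W) > 0$, contradicting the pairwise-null hypothesis (the paper derives this via inclusion--exclusion on $\mu(V_1 \cap W) + \mu(V_2 \cap W)$, you via the nullity of $W \setminus (V_1 \cap V_2)$ --- the same computation in different clothing). You are in fact slightly more complete than the paper, which leaves the injectivity and surjectivity verifications implicit, and your reading of $h(W)$ as a typo for $q(W)$ is the right one.
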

	
	\begin{subproof}
		Take some $W \in \bm{W}$ with $\mu(W) > 0$. By $\mu$-equivalence, there exists a $V \in \bm{V}$ such that $\mu(W) = \mu(W \cap V) = \mu(V)$.
		To see that it is unique, let $V,V' \in \bm{V}$ both be such that the needed relation holds. 
		Then we have $\mu(W) < 2\mu(W) = \mu(V \cap W) + \mu(V' \cap W) = \mu((V \cup V') \cap W) + \mu((V \cap V') \cap W) \leq \mu(W) + \mu(V \cap V')$. This means that $\mu(V \cap V') > 0$. By the condition in the statement of the Lemma, this requires $V = V'$.
	\end{subproof}

	\begin{lemma}
		\label{lem:mucoarse}
		Let $\WW \in \part(\U)$ and let $(A, \P) \in \E$ be such that such that $\{W_{A, P} | P\in \P\}$ is $\mu$-equivalent it. Then there exists a family of partitions 
		$$\{ \P_{\VV} \mid \VV \text{ is a coarsening of } \WW\}$$
		such that 
		\begin{enumerate}
			\item $\{W_{A, P} | P\in \P_{\VV}\}$ is $\mu$-equivalent $\VV$, and 
			\item If $\VV'$ is a coarsening of $\VV$, then $\P_{\VV'}$ is a coarsening of $\P_{\VV}$.
		\end{enumerate}
	\end{lemma}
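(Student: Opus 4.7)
The plan is to build the family canonically by labeling each cell of $\P$ with a cell of $\WW$ and then letting each coarsening $\VV$ of $\WW$ merge the cells of $\P$ whose labels fall in the same cell of $\VV$.

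First, I would apply Lemma \ref{lem:bijection} to the collections $\WW$ and $\{W_{A,P}\}_{P\in\P}$; the hypotheses hold because $\WW$ is a partition and condition \ref{exp:nooverlap} makes the $W_{A,P}$ pairwise $\mu$-null-intersecting. This yields a bijection $q$ between the $\mu$-positive cells of $\WW$ and the $\mu$-positive elements of $\{W_{A,P}\}_{P\in\P}$ satisfying $\mu(W)=\mu(W\cap q(W))=\mu(q(W))$. Pulling back through $q$, define a label map $f:\P\to\WW$ by $f(P)=q^{-1}(W_{A,P})$ when $\mu(W_{A,P})>0$, and $f(P)=W^\star$ for an arbitrarily fixed $W^\star\in\WW$ otherwise.

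Next, for any coarsening $\VV$ of $\WW$, let $\pi_\VV:\WW\to\VV$ denote the natural surjection sending each $W\in\WW$ to the unique $V\in\VV$ containing it. I would define $\P_\VV$ as the partition of $A$ whose non-empty cells are
\[
P_V \;=\; \bigcup\{P\in\P : \pi_\VV(f(P))=V\}, \qquad V\in\VV.
\]
Property (ii) is then immediate: if $\VV'$ coarsens $\VV$ via a surjection $\pi':\VV\to\VV'$, then $\pi_{\VV'}=\pi'\circ\pi_\VV$, so the equivalence relation on $\P$ induced by $\VV'$ is coarser than the one induced by $\VV$, making $\P_{\VV'}$ a coarsening of $\P_\VV$.

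For property (i), I would expand $W_{A,P_V}=\bigcup\{W_{A,P} : P\in P_V\}$ and invoke \ref{exp:nooverlap} to treat this union as essentially $\mu$-disjoint. The $\mu$-null cells of $\P$ contribute zero mass, while the non-null cells contribute $\mu(W_{A,P})=\mu(f(P))$ via the bijection $q$. Summing over $f(P)\in\pi_\VV^{-1}(V)$ gives $\mu(W_{A,P_V})=\sum_{W\in\pi_\VV^{-1}(V)}\mu(W)=\mu(V)$; the same argument applied to $W_{A,P_V}\cap V$ (bounded below using $\mu(W_{A,P}\cap f(P))=\mu(f(P))$ for non-null $P$ with $f(P)\subseteq V$, and trivially above by $\mu(V)$) yields $\mu(W_{A,P_V}\cap V)=\mu(V)$, establishing $\mu$-equivalence. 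The main obstacle, such as it is, is the bookkeeping around the $\mu$-null cells of $\P$: they lack a natural image under $q^{-1}$, so they need an arbitrary placement, and one must verify that this placement remains coherent as $\VV$ is further coarsened. Sending all null cells to the fixed label $W^\star$ does the trick because their contribution to any $W_{A,P_V}$ is always $\mu$-null, leaving the equivalence intact.
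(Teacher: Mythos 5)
Your proposal is correct and follows essentially the same route as the paper: it uses Lemma \ref{lem:bijection} to match the positive-probability cells of $\P$ with cells of $\WW$ and then, for each coarsening $\VV$, merges the matched cells of $\P$ according to which cell of $\VV$ their labels fall into, with the $\mu$-null cells of $\P$ absorbed harmlessly (you send them to a fixed label, the paper lumps them into a leftover cell, and either choice preserves $\mu$-equivalence and the nesting property). No gaps.
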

	
	\begin{subproof}
		First, for each $W \in \WW$ with $\mu(W) > 0$, let $P_W \in \P$ be the unique element such that $\mu(W_{A, P}) = \mu(W_{A, P} \cap W) = \mu(W)$. This exists by Lemma \ref{lem:bijection}.
		
		Now, for each $V \in \VV$, let $[V] = \{P_W \in \P \mid W \in \WW, \mu(W) \geq 0, W \subseteq V\}$. It is easy to see that $\{\bigcup_{P_W \in [V]} W_{A,P_W} \mid V \in \VV\}$ is $\mu$-equivalent $\VV$. Indeed, either $\mu(V) = 0$, in which case $[V] = \emptyset$ or $\mu(V) = \sum_{W \subseteq V} \mu(W) = \sum_{P_W \in [V]} \mu(W_{A,P_W}) =  \mu(\bigcup_{P_W \in [V]} W_{A,P_W})$.
		Let $P_{\VV}$ be the coarsest partition containing $\{\bigcup [V] \mid V \in \VV\}$. Note that $\bigcup [V] \cap \bigcup[V'] = \emptyset$ whenever $V \neq V'$, so $P_{\VV}$ will simply be $\{\bigcup [V] \mid V \in \VV\}$ adjoined with whatever elements of $A$ were not in any $\bigcup[V]$---these are exactly the observations that have 0-probability.
	\end{subproof}
	
	\begin{tproof}{thm:rep}
		From \r{ax:eu} there exists a vNM index $\vnm: \E \to \R$ such that
		$$
		\pi \s \rho \qquad \iff \qquad \sum_{\supp(\pi)}  \vnm(e) \pi(e) \geq \sum_{\supp(\rho)}  \vnm(e) \rho(e)
		$$
		$\vnm$ can be chosen such that $\vnm(\{x\}, \{\{x\}\}) = 0$ (since this induces the trivial partition of $\U$ independent of $x$, by \r{ax:si}, the choice of $x$ is irrelevant).
		
		For each $\WW \in \part(\U)$ let $(A_\WW, \P_\WW) \in \E$ be such that such that $\{W_{A_\WW, P} | P\in \P_\WW\}$ is $\mu$-equivalent to $\mathcal{W}$. This exists by the richness assumption on the set of experiments. Define the function $\phi: \part(\U) \to \R$ as 
		\begin{equation}
			\label{eq:phi}
			\phi: \WW \mapsto \vnm(A_\WW, \P_\WW).
		\end{equation}
		By Axiom \ref{ax:si}, $\phi$ does not depend on the choice of $(A_\WW, \P_\WW)$.
		
		Call $\nu: \W \to \R$ a \emph{GL-representation} of $\phi$ if $\nu(\emptyset) = 0$ and
		\begin{equation}
			\label{eq:rep0}
			\phi(\WW) = \sum_{W \in \WW} \nu(W).
		\end{equation}

		\begin{lemma}
			\label{lem:GL}
			A GL representation of $\phi$ exists.
		\end{lemma}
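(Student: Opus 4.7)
The plan is to use \r{ax:is} together with the expected utility structure from \r{ax:eu} to extract an additive cell-wise representation of $\phi$. The key enabling tool is Lemma \ref{lem:mucoarse}, which realizes a partition of $\U$ together with all its coarsenings as partitions of a single menu $A$, bringing them into the scope of \r{ax:is}.

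First I would establish a \emph{split-independence} property: when a cell $W$ of a partition $\WW$ of $\U$ is subdivided into disjoint $W_a, W_b$, the increment $\phi((\WW\setminus\{W\})\cup\{W_a,W_b\}) - \phi(\WW)$ depends only on $(W, W_a, W_b)$, not on the remaining cells of $\WW$. Given two such partitions $\WW$ and $\VV$ both containing $W$, apply Lemma \ref{lem:mucoarse} to a common refinement of $\WW' = (\WW \setminus \{W\}) \cup \{W_a, W_b\}$ and $\VV' = (\VV\setminus\{W\}) \cup \{W_a, W_b\}$ in order to realize $\WW, \VV, \WW', \VV'$ as partitions $\P_\WW, \P_\VV, \P_{\WW'}, \P_{\VV'}$ of a common menu $A$. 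Since $W$ is a cell of both $\WW$ and $\VV$ and both are coarsenings of a common finer partition, the cell $P \subseteq A$ corresponding to $W$ is the same in $\P_\WW$ and $\P_\VV$. Applying \r{ax:is} with $\P = \P_\WW$, $\Q = \P_{\VV'}$, and $B = P$, direct verification shows $\P_B\Q = \P_\VV$ and $\Q_B\P = \P_{\WW'}$; the expected-utility linearity from \r{ax:eu} then gives
\[
\phi(\WW) + \phi(\VV') = \phi(\VV) + \phi(\WW'),
\]
which is split-independence. Write $\delta(W; W_a, W_b)$ for the resulting common increment.

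Iterating refinements of a single cell in two different orders then immediately yields the cocycle identity
\[
\delta(W; A, B\cup C) + \delta(B\cup C; B, C) = \delta(W; A\cup B, C) + \delta(A\cup B; A, B)
\]
for any disjoint decomposition $W = A \cup B \cup C$: both sides equal the total increment that $\phi$ gains when $W$ is replaced by $\{A, B, C\}$ inside a partition. Given the cocycle, I would construct $\nu$ as a coboundary of $\delta$: set $\nu(\emptyset) = \nu(\U) = 0$ (the latter matches $\phi(\{\U\}) = 0$, established just after \eqref{eq:phi}), fix an arbitrary reference gauge, and extend across $\W$ via the defining relation $\nu(W_a) + \nu(W_b) - \nu(W) = \delta(W; W_a, W_b)$; the cocycle identity ensures the extension is consistent. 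The representation $\phi(\WW) = \sum_{W \in \WW} \nu(W)$ then follows by induction on $|\WW|$, telescoping along a refinement sequence from $\{\U\}$ to $\WW$ and applying the definition of $\delta$ at each step.

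The main obstacle is the last step, the construction of $\nu$ from $\delta$. The cocycle equation determines $\nu$ only up to an additive signed measure of total mass zero, so a specific gauge must be chosen in a way compatible with the full algebra $\W$ and with the boundary constraint $\nu(\U) = 0$. Verifying that this choice is consistent---that the $\nu$ built by one refinement path from $\{\U\}$ agrees with the $\nu$ built by another---is the key technical content, although it follows in a standard way from the cocycle identity once split-independence has been established.
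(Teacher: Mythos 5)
Your first step---realizing $\WW,\VV,\WW',\VV'$ as partitions of a common menu via Lemma \ref{lem:mucoarse} and applying \r{ax:is} with $B$ equal to the cell corresponding to $W$ to get $\phi(\WW)+\phi(\VV')=\phi(\VV)+\phi(\WW')$---is sound and is essentially the same maneuver the paper performs. The paper, however, uses this only to verify the hypothesis of an off-the-shelf result: it checks that $\phi(\WW\land\VV)+\phi(\WW\lor\VV)=\phi(\WW)+\phi(\VV)$ for \emph{non-intersecting} partitions and then invokes Theorem 3.2 of \cite{gilboa1991value}, which delivers the additive representation outright. You instead set out to rebuild that theorem, and the gap sits exactly where you flag it: the passage from the three-piece cocycle identity for $\delta$ to a globally consistent $\nu$ on all of $\W$. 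For a \emph{finite} algebra your plan works (the cocycle identity is precisely the solvability condition for the linear system $\nu(W_a)+\nu(W_b)-\nu(W)=\delta(W;W_a,W_b)$, with the solution set an affine translate of the mass-zero additive measures). But $\W$ is in general an infinite algebra---in the expected-utility application it is the algebra generated by all normal cones---and there the claim that a gauge can be ``fixed arbitrarily and extended consistently'' is not standard: one must select a coherent family of solutions across the directed system of finite subalgebras (or run a Zorn/compactness argument), and the three-set cocycle identity alone does not obviously produce this. That selection is the actual content of Gilboa--Lehrer's theorem, so asserting it ``follows in a standard way'' leaves the hardest part of the lemma unproved.

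The fix is short, though. Your split-independence identity implies the Gilboa--Lehrer valuation condition: if $\WW$ and $\VV$ are non-intersecting with separating event $U$, pass from $\WW\lor\VV$ to $\WW$ by a sequence of single-cell splits inside $U$, and from $\VV$ to $\WW\land\VV$ by the identical sequence of splits; split-independence makes each increment equal across the two chains, so $\phi(\WW)-\phi(\WW\lor\VV)=\phi(\WW\land\VV)-\phi(\VV)$, which rearranges to the required modularity. At that point you can cite Theorem 3.2 of \cite{gilboa1991value} for the existence of $\nu$ and discard the coboundary construction entirely.
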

		
		\begin{subproof}
			Following \cite{gilboa1991value} (Observation 2.1), call two partitions, $\WW, \VV \in \part(\U)$, \emph{non-intersecting} iff there is an event $U \in \W$ such that  $U$ is measurable with respect to both $\WW$ and $\VV$ and such that $\WW|_U$ refines $\VV|_U$ and $\VV|_{U^c}$ refines $\mathcal{U}|_{U^c}$. 
			
			Theorem 3.2 of \cite{gilboa1991value} states that a GL-representation of $\phi$ exists if and only if 
			\begin{equation}
				\label{eq:partialcom}
				\phi(\WW \land \VV) + \phi(\WW \lor \VV) = \phi(\WW) + \phi(\VV)
			\end{equation}
			for any non-intersecting partitions, where $\WW \land \VV$ and $\WW \lor \VV$ denote their meet (coarsest common refinement) and join (finest common coarsening), respectively.
			
			So, let $\WW$ and $\VV$ be non-intersecting and $U$ the jointly measurable event delineating which partition is finer.
			Consider an experiment  $(A,\P_{\WW \land \VV})$ such that 
			$\{W_{A, P} | P\in \P_{\WW\land \VV}\}$ is $\mu$-equivalent to $\mathcal{W}$. Again, this exists by the richness assumption.
			
			By Lemma \ref{lem:mucoarse},we can construct partitions $\P_{\WW}$, $\P_{\VV}$, $\P_{\WW \lor \VV}$, $\P_{\{U,U^c\}}$ of $A$, each inducing a partition $\mu$-equivalent with respect to the corresponding partition of $\U$ (i.e., indicated by the subscript).
			
			Let $B \in \P_{\{U,U^c\}}$ be the cell such that $\mu(W_{A,B}) = \mu(W_{A,B} \cap U) = \mu(U)$. It follows that $B$ is measurable with respect to all of the above partitions, and furthermore, $(\P_{\WW \land \VV})_{B}(\P_{\WW \lor \VV}) = \P_{\WW}$ and $(\P_{\WW \lor \VV})_{B}(\P_{\WW \land \VV}) = \P_{\VV}$.
			
			Applying axiom \r{ax:is}, we have
			$$
			\tfrac12 (A, \P_{\WW \land \VV}) 
			+ 
			\tfrac12 (A, \P_{\WW \lor \VV}) 
			\sim 
			\tfrac12 (A, \P_{\WW}) 
			+
			\tfrac12 (A, \P_{\VV}) 
			$$
			Thus, from \eqref{eq:phi}, the definition of $\phi$, we obtain \eqref{eq:partialcom}. Theorem 3.2 of \cite{gilboa1991value} ensures us of the existence of some GL-representation.
		\end{subproof}

		Call $V \in \W$ \emph{transparent} if for any partition containing $V$, $\{V, W, U_1 \ldots U_n\}$, it follows that
		\begin{equation}
			\label{eq:transparent}
			\phi(\{V, W, U_1 \ldots U_n\}) = \phi(\{V \cup W, U_1 \ldots U_n\}).
		\end{equation}
		
		Notice, while it is easier to write this as a condition on $\phi$, it is completely determined by the preference.
		Let $\Omega^{\deg}$ collect the transparent measurable sets. 
		Notice that if $\nu'$ is any GL-representation of $\phi$ and $V \in \Omega^{\deg}$ and $W \in \W$ with $W \cap V = \emptyset$, it holds that 
		\begin{equation}
			\label{eq:transparent2}
			\nu'(W \cup V) = \nu'(W) + \nu'(V).
		\end{equation}
		This follows immediately from plugging the GL-representation, \eqref{eq:rep0}, into \eqref{eq:transparent}. In particular, notice that $\nu'$ is finitely additive over $\Omega^{\deg}$.
		
		\begin{lemma}
			\label{lem:addondeg}
			Let $\C \subseteq \deg$ be a set of transparent subsets such that (i) $\emptyset \in \C$, (ii) $\U \notin \C$, (iii) $W,V \in \C$ implies $W \cup V \in \C$, and, (iv) $W \in \C$ and $V \in \W$ with $V \subseteq W$ implies $V \in \C$. Then there exists a GL-representation, $\nu$, of $\phi$ such that $\nu(V) = 0$ for all $V \in \C$.
		\end{lemma}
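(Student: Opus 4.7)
The plan is to obtain the desired GL-representation as a perturbation $\nu = \nu' - g$ of the representation $\nu'$ supplied by Lemma \ref{lem:GL}, where $g: \W \to \R$ is a finitely additive signed measure with $g(\U) = 0$ satisfying $g(V) = \nu'(V)$ for every $V \in \C$. Given such a $g$, the verification that $\nu$ meets the requirements is routine: for $V \in \C$, $\nu(V) = \nu'(V) - g(V) = 0$; and for any partition $\WW \in \part(\U)$, finite additivity of $g$ together with $g(\U) = 0$ yields $\sum_{W \in \WW} g(W) = 0$, so $\sum_{W \in \WW} \nu(W) = \sum_{W \in \WW} \nu'(W) - 0 = \phi(\WW)$, and $\nu(\emptyset)=0$ also follows. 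Hence $\nu$ remains a GL-representation.

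The content therefore lies in constructing $g$. The key preliminary observation is that equation \eqref{eq:transparent2} already makes $\nu'$ finitely additive on $\Omega^{\deg}$, and in particular on the ideal $\C$: for disjoint $V_1, V_2 \in \C$, transparency of $V_1$ gives $\nu'(V_1 \cup V_2) = \nu'(V_1) + \nu'(V_2)$. Thus $\nu'|_\C$ is already a finitely additive signed charge on $\C$, and the task reduces to extending it to a finitely additive signed measure on all of $\W$ with total mass zero. I would first extend to the Boolean subalgebra $\mathcal{B}$ of $\W$ generated by $\C$: conditions (iii)--(iv) ensure $\C$ is an ideal, so $\mathcal{B} = \C \cup \C^\ast$ with $\C^\ast := \{V^c : V \in \C\}$, and condition (ii) yields $\C \cap \C^\ast = \emptyset$ (otherwise $V,V^c \in \C$ would force $\U = V \cup V^c \in \C$). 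Define $g$ on $\mathcal{B}$ by $g(V) := \nu'(V)$ for $V \in \C$ and $g(V) := -\nu'(V^c)$ for $V \in \C^\ast$; in particular $g(\U) = -\nu'(\emptyset) = 0$. Finite additivity of $g$ on $\mathcal{B}$ is verified by case analysis on whether pairs of disjoint elements lie in $\C$ or $\C^\ast$, using additivity of $\nu'$ on $\C$ and the identity $\nu'(U_1) + \nu'(U_2) = \nu'(U_1 \cup U_2) + \nu'(U_1 \cap U_2) = \nu'(U_1 \cap U_2)$ for the $\C^\ast \times \C^\ast$ case (where $U_1 \cup U_2 = \U$), together with the normalization $\nu'(\U) = 0$ inherited from $\phi(\{\U\}) = \vnm(\{x\},\{\{x\}\}) = 0$.

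The main obstacle is the final extension of $g$ from $\mathcal{B}$ to all of $\W$ with finite additivity preserved. I expect this step to be the hardest part, but it amounts to invoking a standard result: any finitely additive signed measure on a Boolean subalgebra extends to one on the full algebra, e.g., by Jordan-decomposing $g$ into two non-negative charges and extending each via the usual Zorn/Hahn--Banach argument, or by extending $g$ one new element at a time (for any $U \in \W \setminus \mathcal{B}$, the compatibility constraints always leave a non-empty interval of admissible values for $g(U)$). Since every such extension necessarily preserves $g$ on $\mathcal{B}$, the condition $g(\U) = 0$ is retained, and $\nu := \nu' - g$ is the required GL-representation vanishing on $\C$.
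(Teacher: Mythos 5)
Your proposal is correct and follows the same basic strategy as the paper's proof: write $\nu = \nu' - g$, where $\nu'$ is the GL-representation from Lemma \ref{lem:GL} and $g$ is a finitely additive set function on $\W$ with $g(\U) = 0$ that agrees with $\nu'$ on $\C$; your one-line check that subtracting such a $g$ preserves the GL-property is exactly Proposition 3.3 of \cite{gilboa1991value}, which the paper cites instead. Where you genuinely differ is in how the normalization $g(\U)=0$ is secured. The paper extends $\nu'|_{\C}$ (finitely additive on the ring $\C$ by \eqref{eq:transparent2}) to a charge $\mu'$ on all of $\W$ via Rao--Rao, and then corrects the total mass by subtracting a two-valued charge supported off a maximal ideal containing $\C$, obtained from the Boolean prime ideal theorem. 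You instead build the normalization in from the outset by extending explicitly to the subalgebra $\mathcal{B}=\C\cup\C^*$ generated by the ideal $\C$, where $\C^*=\{V^c : V\in\C\}$, setting $g(V)=-\nu'(V^c)$ on $\C^*$, and only then extending to $\W$; this trades the prime ideal theorem for a hand verification of additivity on $\mathcal{B}$, and both routes ultimately lean on an off-the-shelf extension theorem for finitely additive set functions, so neither is more elementary than the other. Two small points to tidy up: first, your $\C^*\times\C^*$ case is vacuous --- if $V_1^c$ and $V_2^c$ are disjoint with $V_1,V_2\in\C$, then $\U=V_1\cup V_2\in\C$ by (iii), contradicting (ii) --- so the modularity identity you invoke there (which is not in general valid for a GL-representation) is never actually needed; second, the Jordan-decomposition route for the final extension presupposes that $g$ has bounded variation on $\mathcal{B}$, which you do not verify, though the paper's own citation of an extension theorem for the possibly signed charge $\nu'|_{\C}$ carries a comparable burden.
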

		
		\begin{subproof}
			Let $\nu' : \W \to \R$ be a GL-representation, which exists by Lemma \ref{lem:GL}. Notice that $\C$ is a ring of sets, since if $W,V \in \C$ then $W \setminus V \subseteq W \in \C$ by (iv). Further, by \eqref{eq:transparent2}, it follows that $\nu'|_\C: \C \to \R$ is finitely additive. Hence, by Theorem 3.2.5 of \cite{rao1983theory}, there exists a finitely additive measure $\mu': \W \to \R$ that extends $\nu'|_\C$. 
			
			Notice also that $\C$ is an ideal in $\W$ (as a Boolean algebra of sets). Thus by the Boolean prime ideal theorem, $\C$ is contained in some maximal (proper) ideal, $\mathcal{I} \subset \W$. Then 
			$$
			\mu'': W \mapsto 
			\begin{dcases}
				0 &\text{ if } W \in \mathcal{I} \\
				\mu'(\U)  &\text{ otherwise}
			\end{dcases}
			$$
			is a finitely additive measure. It follows that $\mu^\dag = \mu' - \mu''$ is a finitely additive measure and $\mu^\dag(\U) = 0$. By Proposition 3.3 of \cite{gilboa1991value},  $\nu = \nu' - \mu^\dag$ is also a GL-representation of $\phi$. Moreover, for $V \in \C$, we have $\nu(V) = \nu'(V) - \mu^\dag(V) = \nu'(V) - \mu'(V) + \mu''(V) =  \nu'(V) - \nu'(V) + 0 = 0$.
		\end{subproof}
		
		Let $\C^{null} = \{W \in \W \mid \mu(W) = 0\}$. Note that $\C^{null}$ satisfies the conditions for Lemma \ref{lem:addondeg} (that $\C^{null} \subseteq \Omega^{\deg}$ is a straightforward consequence of \r{ax:si}). Thus, we can choose $\nu$ such that $\mu(W) = 0$ implies $\nu(W) = 0$. Now define $\tau$ as
		$$
		\tau: W \mapsto 
		\begin{cases}
			\tfrac{\nu(W)}{\mu(W)} &\text{ if } W \notin \C^{null} \\
			0 &\text{ otherwise.} 
		\end{cases}
		$$
		It is obvious that this $\tau$ satisfies \ref{tau:empty}. To see that it also $\tau$ satisfies \ref{tau:subadd}, first note that by \r{ax:mon}, if $\WW$ refines $\VV$ then $\phi(\WW) \geq \phi(\VV)$, and so by Observation 4.1 of \cite{gilboa1991value} $\nu$ is subadditive.
		
		Now let $\mu(V) > 0$ and $W \subset V$; by sub-additivity $\nu(W) + \nu(V \setminus W) \geq \nu(V)$. Plugging in for the definition of $\tau$ we have $\tau(W)\mu(W) + \tau(V \setminus W)\mu(V \setminus W) \geq \tau(V)\mu(V)$. Dividing by $\mu(V)$ delivers the inequality part of \ref{tau:subadd}. If we further assume that $\mu(W) = 0$, then $W \in \W^{\deg}$ and the equality part of \ref{tau:subadd} follows from the definition of transparency \eqref{eq:transparent2}.
		
		Finally, to see that $\tau$ represents $\s$ according to \eqref{eq:rep2}, let $(A,\Q) \in \E$. By Lemma \ref{lem:allmueqiv}, there is some $\WW \in \part(\U)$ such that $\{W_{A, Q}\}_{Q \in \Q}$ is $\mu$-equivalent to $\WW$. Let $(A_{\WW}, \P_{\WW})$ be the experiment used to define $\phi(\W)$. Clearly, $(A,\Q)$ and $(A_{\WW}, \P_{\WW})$ are themselves $\mu$-equivalent, and hence by \r{ax:si}, $(A,\Q) \sim (A_{\WW}, \P_{\WW})$.
		
		By $\mu$-equivalence, for each $Q \in \Q$ with $\mu(W_{A,Q}) > 0$ there exists some $W^Q \in \WW$ such that $\mu(W_{A,Q}) = \mu(W^Q \cap W_{A,Q}) =  \mu(W^Q)$. In particular, this implies that $\mu(W^Q \setminus W_{A,Q}) = \mu(W_{A,Q} \setminus W^Q) = 0$. This further implies, via the equality part of \ref{tau:subadd}, that $\tau(W^Q) = \tau(W^Q \cup W_{A,Q}) = \tau(W_{A,Q})$. So finally, we have
		\begin{align*}
			\sum_{Q \in \Q} \tau(W_{A,Q}) \mu(W_{A,Q})
			&= \sum_{\substack{Q \in \Q \\ \mu(W_{A,Q}) > 0}} \tau(W^Q) \mu(W^Q)\\
			&= \sum_{W \in \WW} \nu(W)\\
			&= \phi(\WW)\\
			&= \vnm(A_{\WW},P_{\WW})\\
			&= \vnm(A,Q)
		\end{align*}
		as desired.
	\end{tproof}

	\subsection{Proof of Theorem \ref{thm:rich}}
	\subsubsection*{Preliminaries: Convex Spaces}
	
	For a set $X \subseteq \R^\ell$, let $\conv(X)$, $\int(X)$, and $\cl(X)$ denote the convex hull, the interior, and the closure of $X$, respectively. If $X$ is convex, then  $\ext(X)$ collects all the extreme points of $X$ and $\ri(X)$ denotes the relative interior of $X$. When it is not confusing to do so, we will write $\ri(X)$ and $\ext(X)$ to mean $\ri(\conv(X))$ and $\ext(\conv(X))$ for non-convex $X$. 
	
	For convex $X$, let $F \subset X$ be called a face if whenever $\alpha x + (1-\alpha)y \in F$ (for $x,y \in X$) then also $x,y \in F$. Let $\face(X)$ denote the set of all (non-empty) faces of $X$ and $\face^\circ(X) = \{\ri(F) \mid F \in \face(X)\}$.
	
	If $X \subseteq \R^\ell$ is a convex set and $\ext(X)$ is finite then $X$ is a called at polytope. Let $\poly$ denote the set of all polytopes in $\R^\ell$. If $X \in \poly$, then $\face(X)$ is finite.
	
	
	If $K \subseteq \R^\ell$ and $\lambda K \subseteq K$ for all $\lambda \geq 0$ then $K$ is called a \emph{cone}. We say a cone $K$ is generated by $X$ if $K = \{\lambda x \mid x \in X, \lambda \geq 0\}$. A cone $K$ is \emph{polyhedral} if it is generated by a polytope; let $\K^*$ denote all such cones. Let $\K$ denote the set of pointed polyhedral cones, those cones with $\0 \in \ext(K)$. The face of a polyhedral cone is a polyhedral cone.
	
	For $X \in \poly$, let 
	$$X^\star = \bigcup_{I \subseteq \ext(X)} \sum_{i\in I} \tfrac{x_i}{|I|}$$
	The set $X^\star$ is a decision problem that contains one point in the relative interior of every face of $X$. Further, given a partition of $\H = \{H_1 \ldots H_n\}$ of $\face(X)$, let $\H^\star = \{H^\star_1 \ldots H^\star_n\}$ denote the partition of $X^\star$ defined via $H^\star = X^\star \cap (\bigcup_{F \in H_i} \ri(F))$.
	
	For $X \subseteq \R^\ell$ (convex or not) and $x \in X$ let $N(X,x) = \{u \in \U \mid u(y-x) \leq 0, \text{ for all } y \in X\}$ denote the normal cone of $X$ at $x$. Alternatively, $N(X,x) = \{u \in \U^\Delta \mid x \in \argmax_{X} u\}$. Notice that $W_{A,B} = \bigcup_{x \in B} N(A,x)$. 
	
	
	For $X\in \poly$, and a face $F \in \face(X)$, let $N(X,F) = \bigcap_{x \in F} N(X, x)$. It follows that $N(X,F) = \{u \in \U^\Delta \mid F \subseteq \argmax_{x \in D} u(x)\} = N(X,x)$ for any $x \in \ri(F)$. Notice therefore that
	\begin{equation}
		\label{eq:isopartitions}
		\bigcup_{x \in H^\star_i} \ri(N(X^\star,x)) = \bigcup_{F \in H_i} \ri(N(X,F)) 
	\end{equation}
	
	It is immediate that $N(X,F)$ is closed and in $\K^*$. Let $\NN(X) = \{N(X,F) \mid F \in \face(X)\}$ denote the normal fan of $X$; that $\NN(X)$ is a fan indicates that it is a family of cones such, with the following two properties:
	\begin{enumerate}[label=(\roman*), itemsep=-.5em]
		\item Every nonempty face of a cone in $\NN(X)$ is also a cone in $\NN(X)$.
		\item The intersection of any two cones in $\NN(X)$ is a face of both. 
	\end{enumerate}
	Furthermore, $\NN(X)$ is complete (the union of $\NN(X)$ is $\U^\Delta$).
	Let $\NN^\circ(X) = \{\ri(N(X,F)) \mid F \in \face(X)\}$. 
	
	\begin{lemma}
		\label{lem:convexfacts}
		The following are true for all convex $X$ and $Y$:
		
		\begin{enumerate}[itemsep=-.2em]
			\item \label{relintclosure}  $\cl(\ri(X)) = \cl(X)$ (Theorem 6.3 of \cite{rockafellar1970convex}).
			\item \label{facepartition} $\face^\circ(X)$ is a partition of $X$ (Theorem 18.2 of \cite{rockafellar1970convex}). 
			\item \label{facestouch} Let $F \in \face(X)$ and $Y \subseteq X$ be such that $\ri(Y)\cap F \neq \emptyset$, then $Y\subseteq F$. (Theorem 18.1 of \cite{rockafellar1970convex}).  
		\end{enumerate}
	\end{lemma}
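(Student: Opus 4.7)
The plan is essentially to invoke the three referenced theorems from \citet{rockafellar1970convex} after confirming that the definitions of $\ri$, $\cl$, $\face$, and $\face^\circ$ in the preliminaries match Rockafellar's conventions. Since each item is a classical fact from finite-dimensional convex analysis, the proof is largely a matter of citation; below I sketch the substantive ideas in case a self-contained argument is preferred.

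For item \ref{relintclosure}, I would use the line-segment principle: given any $x_0 \in \ri(X)$ and any $x \in \cl(X)$, every point of the half-open segment $(1-\lambda)x_0 + \lambda x$ with $\lambda \in [0,1)$ lies in $\ri(X)$; letting $\lambda \to 1$ shows $x \in \cl(\ri(X))$, giving $\cl(X) \subseteq \cl(\ri(X))$. The reverse containment is trivial from $\ri(X) \subseteq X$, so taking closures yields equality. This presumes $\ri(X) \neq \emptyset$, which holds for nonempty convex subsets of $\R^\ell$.

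For item \ref{facepartition}, the plan is to show that each $x \in X$ belongs to a unique smallest face $F_x$---obtained as the intersection of the (finitely many, in the polyhedral case, or in general by a Zorn-type argument using that intersections of faces are faces)---and that $x \in \ri(F_x)$. Disjointness of the relative interiors of distinct faces then follows: if $\ri(F) \cap \ri(F') \neq \emptyset$ for $F, F' \in \face(X)$, then item \ref{facestouch} (applied first with $Y = F$ and then with $Y = F'$) forces $F \subseteq F'$ and $F' \subseteq F$, hence $F = F'$.

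For item \ref{facestouch}, which is the workhorse, the approach is to pick $y \in \ri(Y) \cap F$ and fix an arbitrary $y' \in Y$. Because $y \in \ri(Y)$, the segment from $y'$ through $y$ can be extended slightly past $y$ while remaining in $Y$---that is, there exist $y'' \in Y$ and $\alpha \in (0,1)$ with $y = \alpha y' + (1-\alpha) y''$. Since $y \in F$ and $F$ is a face of $X$, the definition of a face forces both $y'$ and $y''$ into $F$, so $y' \in F$. As $y' \in Y$ was arbitrary, $Y \subseteq F$. The only subtle step is the extension of the segment beyond $y$, which is exactly the characterization of $\ri(Y)$ I would need to verify; this is the main (and only) obstacle, and it is immediate from the standard definition of relative interior.
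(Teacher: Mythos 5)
Your proposal is correct and matches the paper, which offers no separate proof of this lemma and relies entirely on the citations to \cite{rockafellar1970convex} embedded in the statement itself. Your supplementary sketches (the line-segment principle for item 1, the smallest-face argument for item 2, and the segment-extension characterization of the relative interior for item 3) are the standard arguments underlying those cited theorems and contain no gaps.
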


	\begin{lemma}
		\label{lem:fanpartitions}
		For $X \in \poly$, $\NN^\circ(X)$ is a partition of $\U^\Delta$. 
	\end{lemma}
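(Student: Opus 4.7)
The plan is to exploit the fact that $\NN(X)$ is a \emph{complete fan}, which is the central structural property recalled just before the lemma. The two things I need to establish are covering (every $u \in \U^\Delta$ lies in some $\ri(N(X,F))$) and disjointness (distinct relative interiors do not meet). Both follow from the same argument: each $u$ has a unique ``minimal'' enclosing normal cone, and $u$ lies in its relative interior.

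For covering, let $u \in \U^\Delta$ be arbitrary. By completeness of $\NN(X)$, the collection $\K_u = \{K \in \NN(X) : u \in K\}$ is nonempty. The fan axiom that the intersection of any two cones in $\NN(X)$ is a face of both, together with the fact that faces of cones in a fan remain in the fan, implies $\K_u$ is closed under finite intersection (the intersection lies in $\NN(X)$ and contains $u$). Since $\NN(X)$ is finite (as $X \in \poly$ has finitely many faces), $K^* := \bigcap_{K\in\K_u} K$ belongs to $\K_u$; write $K^* = N(X,F^*)$. Now by Lemma \ref{lem:convexfacts}\ref{facepartition}, the relative interiors of the faces of $K^*$ partition $K^*$, so $u \in \ri(G)$ for some face $G$ of $K^*$. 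But any face of $K^*$ is a face of every $K \in \K_u$ (faces of faces are faces), hence lies in $\NN(X)$ and thus equals some $N(X,F')$; containing $u$, it is a member of $\K_u$, so by minimality of $K^*$ we must have $G = K^*$. Therefore $u \in \ri(N(X,F^*))$.

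For disjointness, suppose $u \in \ri(N(X,F)) \cap \ri(N(X,F'))$. Then both $N(X,F)$ and $N(X,F')$ lie in $\K_u$, and their intersection $I := N(X,F)\cap N(X,F')$ is a face of each. Since $u \in \ri(N(X,F))$ and $I$ is a face of $N(X,F)$ containing $u$, Lemma \ref{lem:convexfacts}\ref{facestouch} (applied with $Y = N(X,F)$ and the face $I$ of the convex cone $N(X,F)$, noting $\ri(N(X,F))\cap I \ni u$) forces $N(X,F)\subseteq I$, so $N(X,F) = I$. Symmetrically, $N(X,F') = I$, giving $N(X,F) = N(X,F')$ and hence $\ri(N(X,F)) = \ri(N(X,F'))$. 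Thus distinct elements of $\NN^\circ(X)$ are disjoint.

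The only subtle point, and the step where I will need to be careful, is invoking the fan axioms to conclude that $K^*$ itself lies in $\NN(X)$ and that intersections of normal cones $N(X,F)\cap N(X,F')$ are again of the form $N(X,F'')$; this is the standard fact that $N(X,F)\cap N(X,F') = N(X,F'')$ where $F''$ is the smallest face of $X$ containing $F \cup F'$, and can be verified directly from the defining condition $N(X,F) = \{u : F \subseteq \argmax_X u\}$. Once this is in hand, the argument above only uses general facts already recorded in Lemma \ref{lem:convexfacts}.
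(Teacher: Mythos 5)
Your proof is correct and follows essentially the same route as the paper's: both parts rest on the fan properties of $\NN(X)$ together with Lemma \ref{lem:convexfacts}, and your disjointness argument via \ref{facestouch} is identical to ours. The only difference is that your covering step constructs the minimal enclosing cone $K^*$, which is more than is needed --- once $u$ lies in \emph{any} $K \in \NN(X)$, the fact that $\face^\circ(K)$ partitions $K$ already places $u$ in $\ri(G)$ for some face $G \in \NN(X)$.
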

	
	\begin{proof}
		Let $u \in \U^\Delta$. Since $\NN(X)$ is complete, $u \in K$ for some $K \in \NN(X)$. By property (i) of fans, we see that $\face(K) \subseteq \NN(X)$; since $\face^\circ(K)$ is a partition of $K$, it follows that $x \in \ri(F)$ for some $F \in \NN(X)$. So, the elements of $\NN^\circ(X)$ cover $\U^\Delta$.
		
		Now assume that $x \in \ri(K) \cap \ri(K')$ for some $K,K' \in \NN(X)$. Then by property (ii) of fans, $K \cap K' \in \face(K)$; moreover, since $x \in (K\cap K') \cap \ri(K) \neq \emptyset$, Lemma \ref{lem:convexfacts}.\ref{facestouch} delivers that the face $K \cap K'$ must be equal to $K$ itself. By symmetry also $K' = (K \cap K') = K$. So, the elements of $\NN^\circ(X)$ are disjoint.
	\end{proof}
	
	\begin{lemma}
		\label{lem:fanpartitions}
		For polytopes $X$ and $X'$, the following are equivalent 
		\begin{enumerate}[label=(\roman*), itemsep=-.1em]
			\item $X = \alpha X' + Z$, for some polytope $Z$ and $\alpha > 0$ 
			\item for all $K \in \NN(X)$ there is a $K' \in \NN(X')$ such that $K \subseteq K'$
			\item $\NN^\circ(X)$ refines $\NN^\circ(X')$.
		\end{enumerate}
	\end{lemma}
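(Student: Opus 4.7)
\bigskip

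\noindent\emph{Proof plan.}
The plan is to show the cyclic chain (i)$\Rightarrow$(ii)$\Rightarrow$(iii)$\Rightarrow$(i). The first two implications are largely formal; the last is the substantive one and will require working with support functions.

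For (i)$\Rightarrow$(ii), I would use the following classical fact about Minkowski sums: if $u \in \U^\Delta$ and $X = \alpha X' + Z$, then $F(X,u) := \argmax_{x \in X} u(x)$ decomposes as $\alpha F(X', u) + F(Z, u)$. Fixing a face $F$ of $X$ and picking $u \in \ri(N(X,F))$, this yields faces $F' \in \face(X')$ and $F_Z \in \face(Z)$ with $N(X, F) = N(X', F') \cap N(Z, F_Z) \subseteq N(X', F')$, giving the desired containment.

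For (ii)$\Rightarrow$(iii), fix $K \in \NN(X)$ and pick $K' \in \NN(X')$ with $K \subseteq K'$. For each $u \in \ri(K)$, let $F_u$ be the smallest face of $K'$ containing $u$; then $u \in \ri(F_u)$ and $F_u \in \NN(X')$ by property (i) of fans. I would then argue $F_u$ is constant as $u$ ranges over $\ri(K)$, by applying Lemma~\ref{lem:convexfacts}.\ref{facestouch} twice: for any $u,v \in \ri(K)$, since $v \in \ri(K) \cap F_v$, we get $K \subseteq F_v$, so $u \in F_v$ and thus $F_u \subseteq F_v$; by symmetry $F_u = F_v$. Hence $\ri(K)$ lies in a single cell of $\NN^\circ(X')$.

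For (iii)$\Rightarrow$(i), I would use the support function approach. Let $h_P(u) := \max_{x \in P} u(x)$; recall that $h_P$ is sublinear (positively homogeneous and convex), that support functions add under Minkowski sum and scale under dilation, and that every piecewise linear sublinear function with finitely many pieces is the support function of a (unique) polytope. Since (iii) gives that $\NN(X)$ refines $\NN(X')$, both $h_X$ and $h_{X'}$ are linear on each cone of $\NN(X)$. Define $g_\alpha := h_X - \alpha h_{X'}$; it is positively homogeneous and piecewise linear on the fan $\NN(X)$, so the whole problem reduces to verifying convexity across each codimension-one wall of $\NN(X)$. The hard part---and the main obstacle---is to argue that for sufficiently small $\alpha > 0$, every such wall is crossed with a non-negative (convex) kink. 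For walls of $\NN(X)$ that are \emph{not} walls of $\NN(X')$, $h_{X'}$ is linear across them, while $h_X$ has a strictly convex kink (otherwise the two adjacent cones would merge), so the kink of $g_\alpha$ is strictly convex for every $\alpha$. For walls that \emph{are} walls of $\NN(X')$, both functions have strictly convex kinks, and the kink of $g_\alpha$ is convex once $\alpha$ is smaller than the minimum ratio (kink of $h_X$)/(kink of $h_{X'}$) over such walls; since $\NN(X)$ has finitely many walls, a single small $\alpha > 0$ works. Then $g_\alpha = h_Z$ for some polytope $Z$, and $h_X = \alpha h_{X'} + h_Z$ gives $X = \alpha X' + Z$.
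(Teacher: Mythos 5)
Your argument is correct, but it takes a genuinely different route from the paper on the substantive direction. The paper disposes of (i)$\Leftrightarrow$(ii) entirely by citing Theorem 15.1.2 of Gr\"unbaum and only proves (ii)$\Leftrightarrow$(iii) directly; its (ii)$\Rightarrow$(iii) step intersects all cones of $\NN(X')$ containing $K$ to get a minimal such cone $K^\dag$ and shows $\ri(K)\subseteq\ri(K^\dag)$, which is essentially the same idea as your ``minimal face $F_u$ is constant on $\ri(K)$'' argument, just packaged differently. What you do differently is reprove the hard implication (iii)$\Rightarrow$(i)---the weak-Minkowski-summand direction of Gr\"unbaum's theorem---from scratch via support functions and the wall-crossing (deformation-cone) argument, and prove (i)$\Rightarrow$(ii) via the standard decomposition $\argmax_{X}u=\alpha\argmax_{X'}u+\argmax_{Z}u$. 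This buys self-containedness, at the cost of two points you should make explicit if you write it out: first, that local convexity of a positively homogeneous piecewise-linear function across every codimension-one wall of a complete fan implies global sublinearity (this is the standard fact your reduction silently invokes, and it needs mild care when $X$ is not full-dimensional, since the normal cones then share a lineality space); second, your dichotomy should be phrased as ``walls of $\NN(X)$ whose two adjacent maximal cones lie in the same maximal cone of $\NN(X')$'' versus not, rather than ``walls that are/are not walls of $\NN(X')$,'' since a wall of $\NN(X)$ need only be contained in a wall of $\NN(X')$ in the second case. Neither point is a gap in the mathematics, and the finitely-many-walls argument for choosing a single small $\alpha>0$ is sound; the paper's version is simply shorter because it leans on the citation.
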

	
	\begin{proof}
		(i) $\leftrightarrow$ (ii) Theorem 15.1.2 of \cite{grnbaum2003convex} 
		
		(ii) $\rightarrow$ (iii) 
		Take some $\ri(K) \in \NN^\circ(X)$. Let $K^\dag= \bigcap \{K' \in \NN(X') \mid K \subseteq K'\}$, which is an element of $\NN(X')$ by (ii) and the properties of fans. Moreover, by construction $K \centernot{\subseteq} F$ for any $F \in \face(K^\dag)$ with $F \subsetneq K^\dag$. Thus, by (the contra-positive of) Lemma \ref{lem:convexfacts}.\ref{facestouch}, we have that $\ri(K) \cap \ri(F) \neq \emptyset$ for all such $F$.  Then, since $\face^\circ(K^\dag)$ partitions $K^\dag$, it follows that $\ri(K) \subseteq \ri(K^\dag)$.
		
		(iii) $\rightarrow$ (ii) Take some $K \in \NN(X)$. Then by (iii) $\ri(K) \subseteq \ri(K')$ for some $K' \in \NN(X')$.
		Since $K$ and $K'$ are closed, we have $K = \cl(K)$ and $K' = \cl(K')$. Thus, $K = \cl(\ri(K)) \subseteq \cl(\ri(K'))  = K'$, where both equalities come from Lemma \ref{lem:convexfacts}.\ref{relintclosure} and the inclusion relation from the fact that taking closures is subset preserving. 
	\end{proof} 
	
	
	
	\begin{tproof}{thm:rich}
		The coarsening property is obvious. We will show that any partition can be captured up to $\mu$-equivalence. 
		Let $\WW = \{W_1, \ldots W_n\} \in \part(\U)$.  First, from \cite{gul2006random} Proposition 6(ii), we can write each $W_i \in \W$  as the finite union of elements in $\K$: $W_i = \bigcup_{j = 1}^{m_i} \ri(K_i^j)$. Moreover, by \cite{gul2006random} Proposition 4, each $K_i^j = N(X_i^j,x_i^j)$ for some polytope $X_i^j$ and $x_i^j \in X_i^j$. Thus $\ri(K_i^j) \in \NN^\circ(X_i^j)$.
		
		Let $a = m_1 + \ldots + m_n$ and consider the polytope $X = \sum_{i = 1}^{n} \sum_{j = 1}^{m_i} \tfrac{1}{a} X_i^j$. By Lemma \ref{lem:fanpartitions}, $\NN^\circ(X)$ refines each $\NN^\circ(X_i^j)$. Let $\H = \{H_1, \ldots, H_n\}$ be a partition of $\face(X)$ defined by
		\begin{equation}
			\label{eq:partiondef}
			H_i = \{ F \in \face(X) \mid \ri(N(X,F)) \subseteq W_i\}
		\end{equation}
		
		Now take some $i \leq n$ and $u \in W_i$. So, there exists some $j \leq m_i$ such that $u \in \ri(K_i^j) \in  \NN^\circ(X_i^j)$. Since $\NN^\circ(X)$ is a partition of $\U$, there exists some $F \in \face(X)$ with $u \in \ri(N(X,F))$, and furthermore, since this partition refines $\NN^\circ(X_i^j)$, $\ri(N(X,F)) \subseteq \ri(K_i^j) \subseteq W_i$. Hence $F \in H_i$ and so $u \in \bigcup_{F \in H_i} \ri(N(X,F))$. We have established that $W_i \subseteq \bigcup_{F \in H_i} \ri(N(X,F))$, and since the other inclusion is obvious, that $W_i = \bigcup_{F \in H_i} \ri(N(X,F))$.
		Now, on the basis of \eqref{eq:isopartitions}, we have
		\begin{equation}
			\label{eq:facepartition}
			W_i = \bigcup_{x \in H^\star_i} \ri(N(X^\star,x))
		\end{equation}
		Finally, by Lemma 2 of \cite{gul2006random}, we know that for $\mu$ which satisfies \ref{exp:nooverlap}, it must be that $\mu(\ri(N(X^\star_\WW,H^\star)) = \mu(N(X^\star_\WW,H^\star).$ Thus, $\{W_{X^\star, H^\star}\}_{H^\star \in \H^\star}$ is $\mu$-equivalent to $\WW$.
	\end{tproof}
	
	\subsection{Proof of Theorem \ref{thm:eu-rep} and \ref{thm:ent_rep}}	
	\begin{tproof}{thm:eu-rep}
		Let $(A,\{P_1, \ldots P_n\})$ and $(B,\{Q_1, \ldots Q_m\})$ be such that $\{W_{A,P_i}\}_{i \leq n}$ is $\mu$-equivalent to $\{W_{B,Q_i}\}_{i \leq m}$.
		Furthermore, from Lemma \ref{lem:bijection}, we can assume there are $1 \leq k \leq n$ elements of each partition with positive $\mu$-probability and for each $i \leq k$, $\mu(W_{A,P_i}) = \mu(W_{A,P_i} \cap W_{B,Q_i}) = \mu(W_{B,Q_i})$.
		
		Consider the problem $C = \frac12A + \frac12B$. For each $i \leq n$, define $R_i \subseteq C$ as $R_i = \{\frac12P_i + \frac12B\} \cap \ext(C)$. 
		Clearly, we have for each $i \leq n$, $R_i \subseteq \frac12P_i + \frac12B$; it follows from \r{ax:ti} that $(A,\{P_1, \ldots P_n\}) \sim (C,\{R_1, \ldots R_n\})$. 
		
		Now for each $i \leq k$, let $R'_i = R_i \cap (\frac12P_i + \frac12Q_i) \cap \ext(C) = (\frac12P_i + \frac12Q_i) \cap \ext(C)$. The final equality arises from the fact that each extreme point of $C$ has a unique decomposition as elements of $A$ and $B$ (so that any $x \in (\frac12P_i + \frac12Q_i) \cap \ext(C)$ was not in $R_j$ for $j < i$).
		We claim that $\mu(W_{C, R_i \setminus R'_i}) = 0$. Indeed, 
		\begin{align*}
			W_{C, R_i \setminus R'_i} &\subseteq W_{A,P_i} \cap \bigcup_{j\neq i} W_{B,Q_j} \\
			&= \bigcup_{j\neq i} (W_{A,P_i} \cap W_{B,Q_j})  \\
			&\subseteq \bigcup_{j\neq i}  \big( (W_{A,Q_i} \cap W_{B,Q_j}) \cup (W_{A,P_i} \setminus W_{B,Q_i}) \big)
		\end{align*}
		The claim then follows from the fact that for all $i \neq j$, $\mu(W_{A,Q_i} \cap W_{B,Q_j}) = 0$ (from \ref{exp:nooverlap}) and $\mu(W_{A,P_i} \setminus W_{B,Q_i}) = 0$ (from $\mu$-equivalence).
		
		By repeatedly appealing to \r{ax:rcr}, we can see that 
		$$(C,\{R_1, \ldots R_n\}) \sim (C,\{R'_1, \ldots R'_k, R^\dag\}),$$
		where $R^\dag = C \setminus \bigcup_{i \leq k} R'_i$. We make use the fact for $i > k$, $\mu(W_{C,R_i}) = 0$ on account of the fact that $W_{C,R_i} \subseteq W_{A, P_i}$. Thus we have 
		\begin{align*}
			(A,\{P_1, \ldots P_n\}) \sim (C,\{R_1, \ldots R_n\}) \sim (C,\{R'_1, \ldots R'_k, R^\dag\})
		\end{align*}
		A symmetric argument ensures that also $(B,\{Q_1, \ldots Q_m\}) \sim C,\{R'_1, \ldots R'_k, R^\dag\}),$
		and so the two experiments are indifferent, as is required.
	\end{tproof}

	\begin{tproof}{thm:ent_rep}
		As before let $\vnm$ be the utility index that represents $\s$, that exists by \r{ax:eu}, normalized such that $\vnm(\{x\},\{\{x\}\}) = 0$; by \r{ax:sym}, the choice of $x$ is irrelevant, and by \r{ax:mon} the trivial experiment is the worst possible, so $\vnm: \E \to \R_{+}$ takes only weakly positive values.
		
		Let $\prob$ denote the set of finitely valued probability distributions, i.e., finite lists taking values in $[0,1]$ whose entries sum to 1. Let $\prob^* \subset \prob$ denote those whose entries are all strictly positive. Define $\zeta: \E \to \prob$ as $\zeta(A,\{P_1, \ldots P_n\}) = (\mu(W_{A,P_1}), \ldots, \mu(W_{A,P_n}))$.
		
		\begin{lemma}
			\label{lemma:constProb}
			For each $\{p_1, \ldots p_n\} \in \prob^*$, there exists some $(A,\P) \in \E$ such that $\zeta(A,\P) = \{p_1, \ldots p_n\}$. Moreover, if $\zeta(A,\P) = \zeta(B,\Q)$ then $(A,\P) \sim (B,\Q)$.
		\end{lemma}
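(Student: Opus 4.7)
The plan is to prove the two claims separately, leveraging the non-atomicity of $\mu$, the richness of $\E$, and the Symmetry axiom \r{ax:sym}.

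For the existence claim, the first step is to use non-atomicity of $\mu$ to construct a $\W$-measurable partition $\{W_1, \ldots, W_n\}$ of $\U$ satisfying $\mu(W_i) = p_i$ for each $i$. Because every $p_i > 0$ and the $p_i$'s sum to one, such sets can be carved off iteratively, using the standard fact that a non-atomic measure attains every value in its range on measurable subsets. Richness of $\E$ then supplies an experiment $(A,\P) \in \E$ whose induced collection $\{W_{A,P} \mid P \in \P\}$ is $\mu$-equivalent to $\{W_1, \ldots, W_n\}$. By the definition of $\mu$-equivalence, the unordered list of measures induced by $(A,\P)$ coincides with $\{p_1, \ldots, p_n\}$, which is exactly $\zeta(A,\P) = \{p_1, \ldots, p_n\}$.

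For the indifference claim, suppose $\zeta(A,\P) = \zeta(B,\Q)$. Relabel the cells of each partition so that $\mu(W_{A,P_i}) = \mu(W_{B,Q_i})$ for every $i \leq n$; this is possible precisely because the unordered lists of induced measures agree (in particular the partitions have the same number of cells, matching the common $n$ demanded by \r{ax:sym}). Then
\begin{align*}
|\mu(W_{A,P_i}) - \tfrac{1}{n}| &= |\mu(W_{B,Q_i}) - \tfrac{1}{n}|
\end{align*}
holds for all $i$, so applying \r{ax:sym} once yields $(A,\P) \s (B,\Q)$ and applying it with the roles of $A$ and $B$ reversed yields $(B,\Q) \s (A,\P)$. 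Hence $(A,\P) \sim (B,\Q)$.

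The main obstacle is the existence half: one must check that non-atomicity of $\mu$ on the algebra $\W$ is strong enough to produce a partition with the prescribed measures. If $\W$ is only an algebra rather than a $\sigma$-algebra, Sierpi\'nski's theorem is not immediately available; one must argue directly that the range of $\mu$ on $\W$ is all of $[0,1]$, so that the iterative carving step can be completed within $\W$ at every stage. Once the desired partition is produced, invoking richness to realize it as an experiment and then applying Symmetry in both directions for the uniqueness half are routine.
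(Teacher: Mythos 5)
Your overall strategy is the paper's own: non-atomicity of $\mu$ yields a measurable partition $\{W_1,\ldots,W_n\}$ of $\U$ with $\mu(W_i)=p_i$, richness realizes it as an experiment, and \r{ax:sym} applied in both directions gives the indifference claim. The second half of your argument is correct and simply spells out what the paper compresses into ``follows directly from \r{ax:sym}.''

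There is, however, one genuine (if easily repaired) gap in the existence half. Richness only guarantees an experiment $(A,\{P_1,\ldots,P_m\})$ whose identifiable sets are $\mu$-equivalent to $\{W_1,\ldots,W_n\}$, and $\mu$-equivalence constrains only the cells of positive measure: the experiment may have $m>n$ cells, the extra ones satisfying $\mu(W_{A,P_j})=0$. In that case $\zeta(A,\P)$ is a list of length $m$ containing zeros, so your assertion that ``the unordered list of measures induced by $(A,\P)$ coincides with $\{p_1,\ldots,p_n\}$'' does not follow from the definition of $\mu$-equivalence alone. The paper closes this by first invoking Lemma \ref{lem:bijection} to arrange $\mu(W_{A,P_i})=\mu(W_i)=p_i$ for $i\le n$ (whence $\mu(W_{A,P_j})=0$ for $j>n$), and then passing to the coarsening $(A,\{P_1\cup\bigcup_{j>n}P_j,\,P_2,\ldots,P_n\})$, which lies in $\E$ by the first richness property and has exactly the prescribed $\zeta$-value. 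Your closing worry about whether non-atomicity on the algebra $\W$ suffices to carve off sets of every prescribed measure is fair, but the paper takes the same step on the same grounds (citing the standard fact about the range of a non-atomic measure), so this is a shared caveat rather than a point of divergence.
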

		
		\begin{subproof}
			It is well know that since $\mu$ is non-atomic, there exists $\{W_1, \ldots W_n\} \in \part(\U)$, such that $\mu(W_i) = p_i$ for $i \leq n$ (for example, see \cite{billingsley1995probability} Problem 2.19(d)). By richness, there exists some $(A,\{P_1, \ldots, P_m\})$ such that $\{W_{A,P_i}\}_{i \in m}$ is $\mu$-equivalent to $\{W_1, \ldots W_n\}$. By Lemma \ref{lem:bijection}, it is without loss of generality to assume $\mu(W_{A,P_i}) = \mu(W_i)$ for $i \leq n$; it follows that $\mu(W_{A,P_j}) = 0$ for $j > n$. Then $(A, \{P_1 \cup \bigcup_{j > n} P_j, P_2, \ldots P_n\}$ is the desired experiment. 
			The later claim follows directly from \r{ax:sym}.
		\end{subproof}
		
		In light of Lemma \ref{lemma:constProb}, we can define the functional $\eta: \prob^* \to \R$ via 
		$$\eta(p_1, \ldots, p_n) = \vnm(A,\P),$$ 
		where $(A,\P) \in \zeta^{-1}(p_1, \ldots, p_n)$. Extend $\eta$ to all of $\prob$ by simply ignoring $0$s. That is, for each $(p_1, \ldots, p_n) \in \prob$, set $\eta(p_1, \ldots, p_n) = \eta(p_{k_1}, \ldots, p_{k_m})$, where $k_1, \ldots, k_m \subseteq 1, \ldots, n$ is the subsequence that selects strictly positive entries. Our normalization $\vnm(\{x\},\{\{x\}\}) = 0$ implies $\eta(1) = 0$.

		For $p = (p_1, \ldots p_n) \in \prob^*$ and $\{q^i\}_{i \leq n}$, where each $q^i = (q^i_1, \ldots, q^i_{m^i}) \in \prob^*$ is of (possibly distinct) length $m^i$, let 
		$$p\otimes \{q^i\}_{i \leq n} = (p_1q^1_1, \ldots, p_1q^1_{m^1}, \ldots, p_nq^n_1, \ldots, p_nq^n_{m^n}).$$
		We can intuitively view $p\otimes \{q^i\}_{i \leq n}$ as the reduction of a compound lottery over over $\sum_{i \leq n} m^i$ outcomes, thinking of $p$ as the marginal on $n$ first stage lotteries and $q^i$ the conditional lottery on $m^i$ outcomes given the realization of $p$.
		
		We will now show that $\eta$ satisfies the following three properties:
		
		\begin{enumerate}[label=(\texttt{K\arabic*})]
			\item \label{k:null}$\eta(p_1, \ldots, p_n) = \eta(p_1, \ldots, p_n, 0)$
			\item \label{k:sym}$\eta(p_1, \ldots, p_n) \leq \eta(\tfrac1n, \ldots, \tfrac1n)$
			\item \label{k:add}$\eta(p\otimes \{q^i\}_{i \leq n}) = \eta(p) + \sum_{i \leq n} p_i \eta(q^i)$
		\end{enumerate}
		
		Property \ref{k:null} follows immediately from the construction of $\eta$, in particular how it is extended from $\prob^*$ to $\prob$.  \ref{k:sym} follows immediately from \r{ax:sym}. We will show \ref{k:add}.
		
		Fix some $p \in \prob^*$ and $\{q^i\}_{i \leq n}$, with each $q^i \in \prob^*$. For each $0 \leq k \leq n$, let 
		\[
		q^{i,k} =
		\begin{dcases*}
			q^{i} &if  $i \leq k$ \\
			(1) &otherwise
		\end{dcases*}
		\]
		Notice that $q^{i,n} = q^i$ and $p\otimes \{q^{i,0}\} = p$. Thus, the result follows by showing that 
		\begin{equation}
			\label{eq:telescope}
			\eta(p\otimes \{q^{i,k}\})  = \eta(p\otimes \{q^{i,k-1}\}) + p_k \eta(q^k),
		\end{equation}
		for $0 < k \leq n$.
		
		For for $1 \leq i \leq n$ and $1 \leq j \leq m^i$, set $r^i_j = p_iq^i_j$.
		With this notation we can write the relevant distributions as 
		\begin{align*}
			p\otimes \{q^{i,k-1}\} =\, &(r^1_1, \ldots, r^1_{m^1}, \ldots, r^{k-1}_1, \ldots, r^{k-1}_{m^{k-1}}, p_{k}, p_{k+1}, \ldots p_n) \\
			p\otimes \{q^{i,k}\} =\, &(r^1_1, \ldots, r^1_{m^1}, \ldots, r^k_1, \ldots, r^k_{m^{k}}, p_{k+1}, \ldots p_n)
		\end{align*}
		
		From Lemma \ref{lemma:constProb}, we obtain some $e^k = (A,\{R^1_1, \ldots, R^k_{m^k}, P_{k+1}, \ldots, P_n\})$ in $\zeta^{-1}(p\otimes \{q^{i,k}\})$ and also some $e' = (B,\{Q_1, \ldots , Q_{m^k}\})$ in $\zeta^{-1}(q^k)$.
		Define $P_k = \bigcup_{j = 1}^{m^k} R^k_j$. By construction, $\mu(W_{A, P_k}) = \sum_{j = 1}^{m^k}\mu(W_{A, R^k_j}) = \sum_{j = 1}^{m^k} p_kq^k_j = p_k$. Thus, we have $e^{k-1} = (A,\{R^1_1, \ldots R^{k-1}_{m^{k-1}}, P_k,\ldots, P_n\})$ is in $\zeta^{-1}(p\otimes \{q^{i,k-1}\})$. Using the definition of $\eta$, we have
		\begin{align}
			\label{eq:entropfrompref}
			\begin{split}
				\eta(p\otimes \{q^{i,k-1}\}) =& \ \vnm(e^{k-1}) \\
				\eta(p\otimes \{q^{i,k}\}) =& \ \vnm(e^{k}) \\
				\eta(q^j) =& \ \vnm(e')
			\end{split}
		\end{align}
		
		At last, we have the requisite ingredients to appeal to \r{ax:ea}, and the representation via $\vnm$, obtaining
		$$
		\tfrac{1}{1 + \mu(W_{A, P_k})}\vnm(e^{k}) + \tfrac{\mu(W_{A, P_k})}{1 + \mu(W_{A, P_k})}0 = \tfrac{1}{1 + \mu(W_{A, P_k})}\vnm(e^{k-1}) + \tfrac{\\mu(W_{A, P_k})}{1 + \mu(W_{A, P_k}))}\vnm(e')
		$$
		Simplifying and plugging in the suitable replacements via \eqref{eq:entropfrompref} yields the desired relation.
		
		Theorem 1 of \cite{aleksandr1957mathematical} shows that if $\eta$ satisfies the properties
		\ref{k:null}--\ref{k:add}, it take the form 
		\begin{equation}
			\label{eq:log_rep}
			\eta(p_1, \ldots, p_n) = - \lambda \sum_{i=1}^n p_i \log(p_i),
		\end{equation}
		where $\lambda > 0$.\footnote{The property \ref{k:add} in \cite{aleksandr1957mathematical} is stated slightly differently: it requires all $q^i$ to be the same length, but allows for zero-probability entries. These formulations are clearly equivalent under \ref{k:null}, where 0s can be added to make each $q^i$ the same length.}
		Since we are free to rescale an expected utility representation by a positive constant (we only used a single degree of freedom in choosing the intercept $\vnm(\{x\},\{\{x\}\}) = 0$) we can set $\lambda = 1$.
		
		Finally, let $(A, \{P_1, \ldots, P_n\}) \in \E$. Without loss of generality, assume the first $k \leq n$ observations have positive probability (i.e., $\mu(W_{A,P_i}) > 0$ if and only if $i \leq k$). Set $P^\dag = P_1 \cup \bigcup_{i = k+1}^n P_i$
		We have
		\begin{align*}
			\vnm(A, \{P_1, \ldots, P_n\}) &= \vnm(A, \{P^\dag, P_2, \ldots, P_k\}) && \text{(from \r{ax:rcr})} \\
			&= \eta(\mu(W_{A,P_1}), \ldots, \mu(W_{A,P_k})) && \text{(definition of $\eta$)} \\
			&= \eta(\mu(W_{A,P_1}), \ldots, \mu(W_{A,P_n})) && \text{(from \ref{k:null})} \\
			&=  - \sum_{i \leq n} \log(\mu(W_{A,P_i}))\mu(W_{A,P_i}) , && \text{(from \eqref{eq:log_rep})} 
		\end{align*}
		as needed to complete the proof.
	\end{tproof}

	\newpage
	
	\bibliographystyle{aer}

	\singlespace
	\bibliography{EED.bib}
	
\end{document}